\def\doi#1{\gdef\@doi{#1}}\def\@doi{}
\global\copyrightetc{Copyright \the\copyrtyr\ 
ACM \the\acmcopyr\ ...\$15.00}
\newfont{\mycrnotice}{ptmr8t at 7pt}
\newfont{\myconfname}{ptmri8t at 7pt}
\let\confname\myconfname%
\DeclareBoldMathCommand{\bbeta}{\beta}
\begin{document}

\definecolor{bleu}{rgb}{0,0,0.8}
\def\todo#1{{\color{red} #1}}
\def\gathen#1{{#1}}

\newtheorem{theo}{Theorem}[section]
\newtheorem{lem}[theo]{Lemma}
\newtheorem{prop}[theo]{Proposition}
\newtheorem{cor}[theo]{Corollary}
\newtheorem{quest}[theo]{Question}
\newtheorem{ex}[theo]{Example}
\newtheorem{rem}[theo]{Remark}
\newtheorem{deftn}[theo]{Definition}

\newcommand{\N}{\mathbb N}
\newcommand{\F}{\mathbb F}
\newcommand{\Z}{\mathbb Z}
\newcommand{\Chi}{\Xi}

\newcommand{\softO}{O\tilde{~}}
\newcommand{\mA}{\mathbf A}
\newcommand{\mB}{\mathbf B}
\newcommand{\mC}{\mathbf C}
\newcommand{\mD}{\mathbf D}
\newcommand{\pcurv}{\mA_p}
\newcommand{\norm}{\mathcal{N}}
\newcommand{\GL}{\text{\rm GL}}
\newcommand{\SL}{\text{\rm SL}}

\newcommand{\cen}{\mathcal Z}
\newcommand{\D}{\mathcal D}

\newcommand{\M}{\mathcal M}
\newcommand{\Mpol}{\mathsf M}

\newcommand{\trp}{{}^{\text t}}

\permission{Permission to make digital or hard copies of all or part of this work for personal or classroom use is granted without fee provided that copies are not made or distributed for profit or commercial advantage and that copies bear this notice and the full citation on the first page. Copyrights for components of this work owned by others than ACM must be honored. Abstracting with credit is permitted. To copy otherwise, or republish, to post on servers or to redistribute to lists, requires prior specific permission and/or a fee. Request permissions from Permissions@acm.org. }

\conferenceinfo{ISSAC '14,}{July 21 - 25, 2014, Kobe, Japan.}
\CopyrightYear{2014}
\crdata{978-1-4503-2501-1/14/07}
\doi{http://dx.doi.org/10.1145/2608628.2608650}


\title{A fast algorithm for computing\\
the characteristic polynomial of the p-curvature}
%
%
%
%
%

\numberofauthors{3} 
\author{
%
%
\alignauthor Alin Bostan\\
\affaddr{INRIA (France)}\\
  \email{\normalsize \textsf{alin.bostan@inria.fr}}
\alignauthor Xavier Caruso\\
  \affaddr{Universit\'e Rennes 1}\\
  \email{\normalsize \textsf{xavier.caruso@normalesup.org}}
\alignauthor \'Eric Schost\\
  \affaddr{Western University}\\
  \email{\normalsize \textsf{eschost@uwo.ca}}
}

\date\today

\maketitle

\begin{abstract} 
  We discuss theoretical and algorithmic questions related to the
  $p$-curvature of differential operators in characteristic $p$. Given
  such an operator~$L$, and denoting by $\Chi(L)$ the characteristic
  polynomial of its $p$-curvature, we first prove a new, alternative,
  description of $\Chi(L)$. This description turns out to be
  particularly well suited to the fast computation of $\Chi(L)$ when
  $p$ is large: based on it, we design a new algorithm for computing
  $\Chi(L)$, whose cost with respect to~$p$ is $\softO(p^{0.5})$
  operations in the ground field. This is remarkable since, prior to
  this work, the fastest algorithms for this task, and even for the
  subtask of deciding nilpotency of the $p$-curvature, had merely
  slightly subquadratic complexity $\softO(p^{1.79})$.
\end{abstract}

\vspace{1mm}
 \noindent
 {\bf Categories and Subject Descriptors:} \\
\noindent I.1.2 [{\bf Computing Methodologies}]:{~} Symbolic and Algebraic
  Manipulation -- \emph{Algebraic Algorithms}

 \vspace{1mm}
 \noindent
 {\bf General Terms:} Algorithms, Theory

 \vspace{1mm}
 \noindent
 {\bf Keywords:} Algorithms, complexity, differential equations, $p$-curvature.

\medskip
\section{Introduction}\label{sec:intro}

This article deals with some algorithmic questions related to linear
differential operators in positive characteristic $p$. More precisely, we
address the problem of the efficient computation of the characteristic
polynomial of the $p$-curvature of such a differential operator $L$. Roughly
speaking, the $p$-curvature of $L$ is a matrix that measures
to what extent the solution space of $L$ has dimension close to its order. The
theory was initiated in the 1970s by Katz, Dwork and
Honda~\cite{Katz70,Dwork82,Honda81} in connection with one of Grothendieck's
conjectures which states that an irreducible linear differential operator with
coefficients in $\mathbb{Q}(x)$ admits a basis of algebraic solutions over
$\mathbb{Q}(x)$ if and only if its reductions modulo $p$ admit a zero
$p$-curvature for almost all primes $p$. 

Let $k$ be \emph{any} field of characteristic~$p$, and let $k(x)\langle \partial
\rangle$ be the algebra of differential operators with coefficients in $k(x)$,
with the commutation rule $\partial x = x\partial + 1$. The $p$-curvature of a
differential operator $L$ of order $r$ in $k(x)\langle \partial \rangle$,
hereafter denoted $\mA_p(L)$, is the $(r\times r)$ matrix with coefficients in
$k(x)$, whose $(i,j)$ entry is the coefficient of $\partial^{i}$ in the
remainder of the Euclidean (right) division of $\partial^{p+j}$ by $L$, for $0
\le i,j < r$.

We focus on the computation in good complexity, notably with respect to the
parameter~$p$, of the characteristic polynomial $\Chi(L)$ of the $p$-curvature
$\pcurv(L)$. An important sub-task is to decide efficiently whether
$\pcurv(L)$ is nilpotent. By a celebrated theorem of the
Chudnovskys'~\cite{ChCh85}, least order differential operators satisfied by
$G$-series possess reductions modulo $p$ with nilpotent $p$-curvatures for
almost all primes~$p$.

Studying the complexity of the computation of $\Chi(L)$ is an
interesting problem in its own right. This computation is for instance
one of the basic steps in algorithms for factoring linear differential
operators in
characteristic~$p$~\cite{vanDerPut95,vanDerPut96,Cluzeau03}. Additional
motivations for studying this question come from concrete
applications, in combinatorics~\cite{BoKa08b,BoKa08a} and in
statistical physics~\cite{BBHMWZ08}, where the $p$-curvature serves as
an \emph{a posteriori} certification filter for differential operators
obtained by guessing techniques from power series expansions. In such
applications, the prime number $p$ may be quite large (thousands, or
tens of thousands), since its value is lower bounded by the precision
of the power series needed by guessing, which is typically large for
operators of large size. This explains our choice of considering $p$
as the most important complexity parameter.

\smallskip\noindent{\bf Previous work.} Since $k(x)\langle \partial
\rangle$ is noncommutative, binary powering cannot be used to compute
$\partial^p \bmod L$.  Katz~\cite{Katz82} gave the first algorithm for
$\mA_p(L)$, based on the recurrence $\mA_1 = \mA,\quad \mA_{k+1} =
\mA_k' + \mA \mA_k,$ where $\mA \in \mathscr{M}_r(k(x))$ is the
companion matrix associated to $L$.  This algorithm, as well as its
variants~\cite[\S13.2.2]{PuSi03} and~\cite[Prop.~3.2]{Cluzeau03} have
complexity quadratic in~$p$.  The first subquadratic algorithm was
designed in~~\cite[\S6.3]{BoSc09}.  It has complexity
$\softO(p^{1.79})$ and it is based on the observation that the
$p$-curvature $\pcurv(L)$ is obtained by applying the matrix operator
$(\partial+\mA)^{p-1}$ to~$\mA$, and on a baby steps/giant steps
algorithm for applying differential operators to polynomials.

Several partial results concerning the $p$-curvature were obtained
in~\cite{BoSc09}: computation of $\pcurv(L)$ in {$O(\log(p))$} for \emph{first
order} operators and in quasi-linear time {$\softO(p)$} for \emph{certain
second order} operators; algorithms of complexity {$\softO({p}^{0.5})$} for
deciding nilpotency of $\mA_p(L)$ for \emph{second order} operators, and {$\softO(p)$} for the
nullity of $\mA_p(L)$ for \emph{arbitrary operators}.

\smallskip\noindent{\bf Our contribution.} 
Prior to this work, the computation of the characteristic polynomial of 
the $p$-curvature required the computation of the $p$-curvature itself 
as a preliminary step. We manage to compute $\Chi(L)$ without 
$\pcurv(L)$ by exploiting in a completely explicit and elementary way 
the fact that the Weyl algebra $k[x]\langle \partial \rangle$ is a 
central separable (Azumaya) algebra over its centre $k[x^p,\partial^p] 
$, and thus endowed with a \emph{reduced norm 
map}~\cite{Revoy73,KnOj74,CaLeBo12}.

Our crucial observation is that the characteristic polynomials of the 
$p$-curvature of elements in $k(x)\langle \partial \rangle$ are closely 
related to other polynomials associated to operators 
lying in the skew ring $k(\theta)\langle \partial^{\pm 1} \rangle$ on 
which the multiplication is determined by the rule $\partial \theta = 
(\theta + 1) \partial$. More precisely, given such an operator $L$, we
define its $p$-curvature $\mB_p(L)$ and compare its characteristic
polynomial to that of $\mA_p(L)$ when $L$ makes sense in both rings
$k(x)\langle \partial \rangle$ and $k(\theta)\langle \partial^{\pm 1} 
\rangle$ (Theorem~\ref{theo:comparison}). In addition, the computation
of the characteristic polynomial of $\mB_p(L)$ reduces to that of a 
matrix factorial of length $p$, which can be performed in 
$\softO(p^{0.5})$ operations in~$k$ via the baby steps/giant steps 
approach in~\cite{ChCh88}. This allows us to compute $\Chi(L)$ in 
complexity quasi-linear in $p^{0.5}$.

\smallskip\noindent{\bf Structure of the paper.}  In Section
\ref{sec:rings}, we introduce all rings of differential operators that
we need and recall their basic properties. Section \ref{sec:pcurv} is
devoted to the theoretical study of the $p$-curvature of there
differential operators and culminates in the proof of
Theorem~\ref{theo:comparison}. In Section \ref{sec:algo}, we move to
applications to algorithmics: after some preliminaries, we describe
our main algorithm for computing $\Chi(L)$ in complexity
$\softO(p^{0.5})$. We conclude with the implementation of our
algorithm and some benchmarks and applications.

\smallskip\noindent{\bf Acknowledgements.} 
We would like to thank the referees
for their insightful remarks. This work was supported by NSERC,
the CRC program and the MSR-Inria Joint Centre.


\section{Differential operators}
\label{sec:rings}

Throughout this article, $p$ is a prime number and the letter $k$ 
denotes a field of characteristic $p$. We use the classical notations 
$k[x]$ and $k(x)$ to refer to the ring of polynomials over $k$ and the 
field of rational fractions over $k$ respectively. We recall that $k(x)$ 
is the field of fractions of $k[x]$.


\subsection{Usual differential operators}

The ring of differential operators over $k(x)$, that we shall denote
$k(x)\langle\partial\rangle$ in the sequel, is a noncommutative ring 
whose elements are polynomials in $\partial$ of the form:
$$L=f_0(x) + f_1(x) \partial + f_2(x) \partial^2 + \cdots + f_r(x)
\partial^r$$ where $f_i(x)$ are elements in $k(x)$. The multiplication 
in $k(x)\langle\partial\rangle$ is determined by the so-called Leibniz 
rule:
\begin{equation}
\label{eq:leibniz}
\partial  f = f \partial + f'
\end{equation}
where $f$ is in $k(x)$ and $f'$ denotes its derivative.
Recall~\cite{ore33} that $k(x)\langle\partial\rangle$ is a
noncommutative Euclidean ring (on the left and on the right); this
implies that $k(x)\langle\partial\rangle$ is principal and that there
is a notion of left and right gcd's over this ring. Euclid's algorithm
and B\'ezout's theorem extend as well.

For the purpose of this article, we shall need to invert
formally the variable $\partial$. To do this, we consider the additive 
group consisting of Laurent polynomials in $\partial$ over $k(x)$, 
\emph{i.e.} polynomials having the form:
$$f_{-s}(x) \partial^{-s} + \cdots + f_0(x) + \cdots + f_r(x)
\partial^r \quad \text{(with } s, r \in \N \text{)}$$
and define a multiplication on it by letting for all $f \in k(x)$:
\begin{equation}
\label{eq:iteribp}
\partial^{-1} f = \sum_{i=0}^{p-1} (-1)^i f^{(i)} \: \partial^{-i-1},
\end{equation}
where $f^{(i)}$ denotes the $i$-th derivative of $f$. The latter formula 
is obtained by performing $p-1$ integrations by parts and noting that the 
$p$-th derivative of any element in $k(x)$ vanishes. It is an exercise to 
check that Eq.~\eqref{eq:iteribp} defines a ring structure on 
$k(x)\langle\partial^{\pm 1}\rangle$, which extends the one of
$k(x)\langle\partial\rangle$.

We will often work with the sets $k[x]\langle \partial \rangle$ and
$k[x]\langle \partial^{\pm 1} \rangle$ consisting of all operators in
$k(x)\langle \partial \rangle$ and $k(x)\langle \partial^{\pm 
1}\rangle$ respectively, whose coefficients belong to $k[x]$.  It is 
easily seen from \eqref{eq:leibniz} and \eqref{eq:iteribp} that
$k[x]\langle\partial\rangle$ is actually a subring of
$k(x)\langle\partial\rangle$ and that $k[x]\langle\partial^{\pm
1}\rangle$ is a subring of $k(x)\langle \partial^{\pm 1}\rangle$.

Recall that the \emph{centre\/} of a noncommutative ring $A$ is the 
subring of $A$ consisting of all elements which commute with all 
elements in $A$. The centres of $k(x)\langle\partial\rangle$ and 
$k(x)\langle\partial^{\pm 1}\rangle$ are $k(x^p)[\partial^p]$ and 
$k(x^p)[\partial^{\pm p}]$, respectively; the same holds for their 
counterparts with polynomial coefficients~\cite{Revoy73,vanDerPut95}. 
They will play an important role in this article.


\subsection{The Euler operator}\label{ssec:euler}

The Euler operator is the element $x \partial$. One important feature of 
it is that it satisfies simple relations of commutation against $x$ and 
$\partial$, namely:
$$x \cdot (x \partial) = (x \partial - 1) \cdot x \quad \text{and}
\quad \partial \cdot (x\partial) = (x \partial + 1) \cdot \partial.$$
This motivates the following definition. We introduce a new variable
$\theta$ and consider the field $k(\theta)$ of rational fractions over
$k$ in the variable $\theta$. We define the noncommutative ring
$k(\theta)\langle\partial\rangle$
(resp.\ $k(\theta)\langle\partial^{\pm 1} \rangle$) whose elements are
polynomials (resp.\ Laurent polynomials) over $k(\theta)$ in the
variable $\partial$, and on which the multiplication follows the rule:
\begin{equation}
\label{eq:euler}
\partial^i g(\theta) = g(\theta + i) \: \partial^i, \quad \text{for all} \; i \in \Z \; \text{and} \; g \in k(\theta).
\end{equation}
Just like $k(x)\langle\partial\rangle$, the ring 
$k(\theta)\langle\partial\rangle$ is Euclidean on the left and on the
right and therefore admits left (resp.\ right) gcd's.
The centres of $k(\theta)\langle\partial\rangle$ and $k(\theta) 
\langle\partial^{\pm 1}\rangle$ are $k(\theta^p-\theta)[\partial^p]$ and 
$k(\theta^p-\theta)[\partial^{\pm p}]$, respectively.

As we did for  usual differential operators, we define
$k[\theta]\langle\partial\rangle$ and $k[\theta]\langle\partial^{\pm
1}\rangle$ as the subsets of respectively $k(\theta)
\langle\partial\rangle$ and $k(\theta) \langle\partial^{\pm 1}\rangle$
consisting of all operators having coefficients in $k[\theta]$;
Formula~\eqref{eq:euler} shows that $k[\theta]\langle\partial\rangle$
and $k[\theta]\langle\partial^{\pm 1}\rangle$ are closed under
multiplication, and hence are rings.
It is easily seen that the following two morphisms of $k$-algebras
$$\begin{array}{rcl}
k[x]\langle\partial^{\pm 1}\rangle & \rightleftarrows &
k[\theta]\langle\partial^{\pm 1}\rangle \smallskip \\
x & \mapsto & \theta \partial^{-1} \smallskip \\
x \partial & \mapsfrom & \theta \smallskip \\
\partial^{\pm 1} & \leftrightarrow & \partial^{\pm 1} 
\end{array}$$
define inverse isomorphisms between the rings $k[x]\langle\partial^{\pm 
1}\rangle$ and $k[\theta]\langle\partial^{\pm 1}\rangle$. Beware 
however that these isomorphisms \emph{do not\/} extend to isomorphisms 
between 
$k(x)\langle\partial^{\pm 1}\rangle$ and $k(\theta)\langle\partial^{\pm 
1}\rangle$. Indeed, an element of $k[x]$ (resp. of $k[\theta]$) is in
general not invertible in $k(\theta)[\partial^{\pm 1}]$ (resp. 
in $k(x)\langle\partial^{\pm 1}\rangle$).

We remark that under the above identification, the central element
$\theta^{p}-\theta$ corresponds to $x^p \partial^p$.


\section{A theoretical study of the\\ {\Large \lowercase{$p$}}-curvature}
\label{sec:pcurv}


\subsection{Definitions and first properties}
\label{ssec:defChi}


\noindent{\bf \rm Over $k(x)\langle\partial\rangle$.}
Let $L$ be a differential polynomial in
$k(x)\langle\partial\rangle$. We denote by $k(x)\langle\partial\rangle
\: L$ the set of right multiples of $L$, that is the set of
differential polynomials of the form $Q L$ for some $Q \in
k(x)\langle\partial\rangle$. Clearly, it is a vector space over
$k(x)$. The quotient $M_L = k(x)\langle\partial\rangle /
k(x)\langle\partial\rangle L$ is a finite dimensional vector space
over $k(x)$ and a basis of it is $(1, \partial, \ldots,
\partial^{r-1})$, where $r$ denotes the degree of $L$ with respect to
$\partial$.

\begin{deftn}
  The \emph{$p$-curvature} of
  $L \in k(x)\langle\partial\rangle$ is the $k(x)$-linear endomorphism of $M_L$ induced by the
  multiplication by the central element $\partial^p$.
\end{deftn}


Given $L \in k(x)\langle\partial\rangle$, we denote by $\mA_p(L)$ the 
matrix of the $p$-curvature of $L$ in the basis $(1, \partial, \ldots, 
\partial^{r-1})$ and by $\chi(\pcurv(L))$ its characteristic polynomial:
$$\chi(\pcurv(L))(X) = \det(X \cdot \text{Id} - \pcurv(L)).$$ 
It is well-known~\cite{vanDerPut96} that all coefficients of 
$\chi(\pcurv(L))$ lie in $k(x^p)$. For our purposes, it will be 
convenient to renormalize $\chi(\pcurv(L))$ as follows: we set 
$$\Chi_{x,\partial}(L) = f_r(x)^p \cdot \chi(\pcurv(L))(\partial^p)$$ 
where $f_r(x)$ is the leading coefficient of $L$. We note that 
$\Chi_{x,\partial}(L)$ belongs to $k(x^p)[\partial^p]$, \emph{i.e.} to 
the centre of $k(x) \langle\partial\rangle$.

\begin{lem}
  \label{lem:propChi}
  Let $L$ be a differential operator in $k(x)\langle \partial\rangle$.
  \begin{enumerate}[(i)]
  \item The degree of $\Chi_{x,\partial}(L)$ in the variable 
    $\partial^p$ is equal to the degree of $L$ in the variable $\partial$.
  \item $L$ divides $\Chi_{x,\partial}(L)$ on both sides.
  \item if $L$ is irreducible in $k(x)\langle \partial\rangle$, then
  $\Chi_{x,\partial}(L)$ is a power of an irreducible element of
  $k(x^p)[\partial^p]$.
  \end{enumerate}
  Besides, the map $\Chi_{x,\partial}$ is multiplicative.
\end{lem}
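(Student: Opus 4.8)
The plan is to establish the four assertions essentially independently; only (iii) requires real work.

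Throughout write $R=k(x)\langle\partial\rangle$, let $r$ be the degree of $L$ in $\partial$ and $f_r\in k(x)$ its leading coefficient, and recall that $M_L=R/RL$ is a left $R$-module which, as a $k(x)$-vector space, has dimension $r$ with basis $1,\partial,\dots,\partial^{r-1}$, the $p$-curvature $\pcurv(L)$ acting by left multiplication by $\partial^p$. Assertion (i) is immediate: $\chi(\pcurv(L))$ is monic of degree $r$ in its variable, so $\chi(\pcurv(L))(\partial^p)$ has degree $r$ in $\partial^p$, and multiplying by the nonzero scalar $f_r^p\in k(x^p)$ leaves that degree unchanged. For the right-divisibility part of (ii), I apply Cayley--Hamilton to the $k(x)$-linear endomorphism $\pcurv(L)$ of $M_L$: evaluating the identity $\chi(\pcurv(L))(\pcurv(L))=0$ at the generator $\overline 1$ of $M_L$ gives $\chi(\pcurv(L))(\partial^p)\in RL$, hence $\Chi_{x,\partial}(L)=f_r^p\cdot\chi(\pcurv(L))(\partial^p)\in RL$, i.e.\ $L$ divides $\Chi_{x,\partial}(L)$ on the right. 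For left divisibility, write $\Chi:=\Chi_{x,\partial}(L)=QL$; since $\Chi$ is central it commutes with $L$, so $LQL=L\Chi=\Chi L=QL^2$, whence $(LQ-QL)L=0$, and since $R$ is a domain and $L\neq 0$ this forces $LQ=QL$, so $\Chi=LQ$ as well.

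For multiplicativity I would use the short exact sequence of left $R$-modules $0\to M_{L_1}\to M_{L_1L_2}\to M_{L_2}\to 0$, where the surjection is the canonical projection $R/RL_1L_2\to R/RL_2$ (well defined since $RL_1L_2\subseteq RL_2$) and the injection is right multiplication by $L_2$, identifying $M_{L_1}=R/RL_1$ with $RL_2/RL_1L_2$ (a bijection because $R$ has no zero divisors). All three maps are $k(x)$-linear and commute with the action of the central element $\partial^p$, so the characteristic polynomial of $\pcurv(L_1L_2)$ is the product of those of $\pcurv(L_1)$ and $\pcurv(L_2)$; together with $\deg_\partial(L_1L_2)=\deg_\partial L_1+\deg_\partial L_2$ and multiplicativity of leading coefficients (the ring being a domain), this yields $\Chi_{x,\partial}(L_1L_2)=\Chi_{x,\partial}(L_1)\,\Chi_{x,\partial}(L_2)$, with order irrelevant as all factors are central.

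For (iii), assume $L$ irreducible; then $M_L$ is a \emph{simple} left $R$-module, since a proper nonzero submodule of $R/RL$ would yield a nontrivial right divisor of $L$. By Schur's lemma $E:=\mathrm{End}_R(M_L)$ is a division ring, and the centre $k(x^p)[\partial^p]$ of $R$ maps into $E$ (a central element acts on $M_L$ by an $R$-linear endomorphism). This map is not injective: its source is infinite-dimensional over $k(x^p)$, whereas $E\subseteq\mathrm{End}_{k(x)}(M_L)$ is finite-dimensional over $k(x^p)$; so, $E$ being a domain, its kernel is a nonzero prime ideal $(P)$ of the PID $k(x^p)[\partial^p]$, with $P$ irreducible. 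Then $P$ annihilates $M_L$, so $P\in RL$, say $P=SL$. Applying the multiplicativity just proved together with the elementary identity $\Chi_{x,\partial}(z)=z^p$ for every nonzero central $z$ (proved by identifying $M_z$ with a free module of rank $p$ over $k(x)[\partial^p]/(z)$ on which $\partial^p$ acts by multiplication), we get $P^p=\Chi_{x,\partial}(S)\,\Chi_{x,\partial}(L)$ in $k(x^p)[\partial^p]$; hence $\Chi_{x,\partial}(L)$ divides $P^p$ and, being a nonzero non-unit, is --- up to the unit $f_r^p$ --- a power of the irreducible element $P$ (equivalently, $\chi(\pcurv(L))$ is a power of an irreducible polynomial of $k(x^p)[X]$). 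I expect (iii) to be the only real obstacle: one has to transport simplicity of $M_L$ into the commutative centre (which is what the division-ring/Schur step does) and then relate the characteristic polynomial --- a priori an invariant over the larger field $k(x)$ --- back to $k(x^p)[\partial^p]$, which is exactly what the two auxiliary identities about $\Chi_{x,\partial}$ accomplish.
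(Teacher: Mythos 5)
Your proof is correct, and on the two substantive points it takes a genuinely different route from the paper's. For (i) and (ii) both you and the paper rely on Cayley--Hamilton, but you also spell out why centrality of $\Chi_{x,\partial}(L)$ upgrades right divisibility to divisibility on both sides, a step the paper leaves implicit; and where the paper simply cites \cite[Lemma 1.13]{ClvH04} for the multiplicativity of $\chi\circ\pcurv$, you reprove it via the exact sequence $0\to M_{L_1}\to M_{L_1L_2}\to M_{L_2}\to 0$, making the lemma self-contained (one small slip: the multiplicativity of leading coefficients comes from the Leibniz rule $\partial f = f\partial + f'$, not merely from the absence of zero divisors --- in $k(\theta)\langle\partial\rangle$, for instance, the leading coefficient of a product involves a shift). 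The genuine divergence is in (iii). The paper argues by contradiction: two distinct irreducible central factors of $\chi(\pcurv(L))$ cannot both fail to be coprime to the irreducible $L$, and a factor coprime to $L$ would, by B\'ezout, act invertibly on $M_L$, contradicting its dividing the characteristic polynomial. You instead use that $M_L$ is simple, so that $\mathrm{End}_{k(x)\langle\partial\rangle}(M_L)$ is a division ring by Schur's lemma and the centre $k(x^p)[\partial^p]$ maps onto a domain with nonzero kernel $(P)$, $P$ irreducible; writing $P=SL$ and combining multiplicativity with the identity $\Chi_{x,\partial}(z)=z^p$ for central $z$ (which you justify correctly, and which is the $(x,\partial)$-analogue of Lemma~\ref{lem:norm}(ii)) gives $\Chi_{x,\partial}(L)\mid P^p$. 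Your route costs two auxiliary facts but identifies the irreducible factor intrinsically, as the generator of the central annihilator of the simple module $M_L$, which sits well with the Azumaya/reduced-norm viewpoint of Section~3; the paper's route is shorter and needs only B\'ezout in $k(x)\langle\partial\rangle$.
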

\begin{proof}
  The first assertion is obvious, while the second one 
  is a direct consequence of Cayley-Hamilton Theorem.


  We are going to prove \emph{(iii)} by contradiction: we pick  an irreducible differential operator $L \in k(x)
  \langle\partial\rangle$ and 
  assume that there exist two distinct irreducible polynomials 
  $N_1$ and $N_2$ that both divide $\chi(\mA_p(L))$. Since these 
  polynomials are coprime, there must exist $i \in \{1, 2\}$ such
  that $N_i(\partial^p)$ is coprime with $L$. By B\'ezout's theorem,
  this implies that $N_i(\partial^p)$ defines an invertible 
  endomorphism of $M_L$. This contradicts the fact that $N_1$
  divides the characteristic polynomial of $\partial^p$ acting on
  this space.

  The Leibniz rule \eqref{eq:leibniz}
  implies that the leading coefficient of the product $L_1 L_2$ is
  equal to the product of the leading coefficients of the
  factors. Moreover, by \cite[Lemma 1.13]{ClvH04}, we know that $\chi \circ \pcurv$
  is multiplicative. The multiplicativity of $\Chi_{x,\partial}$
  follows.
\end{proof}

The multiplicativity property allows us to extend the map $\Chi_{x,
  \partial}$ to $k(x) \langle\partial^{\pm 1}\rangle$. Indeed given a
differential operator $L$ in the latter ring, there exists an integer
$n$ such that $L \cdot \partial^n$ lies in $k(x)
\langle\partial\rangle$ and we can define
$\Chi_{x, \partial}(L) = \partial^{-pn} \cdot \Chi_{x, \partial}(L
\cdot \partial^n)$. The multiplicativity property and the fact that
$\Chi_{x,\partial}(\partial) = \partial^p$ show that this definition does not
depend on the choice of~$n$. The extended map $\Chi_{x, \partial}$
takes its values in $k(x^p)[\partial^{\pm p}]$, that is again the
centre of $k(x) \langle\partial^{\pm 1}\rangle$.

\smallskip\noindent{\bf \rm Over $k(\theta)\langle\partial\rangle$.}
Following \cite[\S 5]{vdPSi97}, we extend the definition of 
$p$-curvature
to differential operators over $k(\theta)$.

Given an element $L$ in $k(\theta)\langle\partial\rangle$, we consider the 
quotient $k(\theta)\langle\partial\rangle / k(\theta)\langle\partial\rangle 
L$ and define the \emph{$p$-curvature} of $L$ as the endomorphism of 
this space given by multiplication by $\partial^p$.
As before, the quotient above is a finite dimensional vector space over 
$k(\theta)$ and admits $(1, \partial, \ldots, \partial^{r-1})$ as a basis, where $r$ 
denotes the degree of $L$ with respect to $\partial$.
Let us denote by $\mB_p(L)$ the matrix of the $p$-curvature of $L$ in the 
basis $(1, \partial, \ldots, \partial^{r-1})$ considered above. 
The following easy lemma gives an explicit formula for it.

\begin{lem}
  \label{lem:pcurvtheta}
  Let $L \in k(\theta)\langle\partial\rangle$ be a differential
  polynomial of degree $r$ with respect to the variable $\partial$.
  Let $\mB(\theta) \in \mathscr{M}_r(k(\theta))$ denote the companion
  matrix of $L$.  Then:
  $$\mB_p(L) = \mB(\theta) \cdot \mB(\theta+1) \cdots \mB(\theta + p - 1).$$
\end{lem}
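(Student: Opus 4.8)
The plan is to recognise the $p$-curvature of $L$ as the $p$-th iterate of ``left multiplication by $\partial$'' on the quotient module, to observe that this map is \emph{semilinear} rather than $k(\theta)$-linear, and then to track the twist as it accumulates through the $p$ iterations.

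\textbf{Step 1: semilinearity and the matrix of $\phi$.} Write $M = k(\theta)\langle\partial\rangle/k(\theta)\langle\partial\rangle L$ with its basis $\mathbf{e} = (1,\partial,\ldots,\partial^{r-1})$, and let $\sigma$ be the $k$-automorphism of $k(\theta)$ sending $g(\theta)$ to $g(\theta+1)$; note $\sigma^p = \mathrm{id}$, since $g(\theta+p) = g(\theta)$ in characteristic $p$. Let $\phi\colon M\to M$ be the map $m\mapsto\partial m$; it is well defined because $k(\theta)\langle\partial\rangle L$ is a left ideal. Writing $L = \sum_{i=0}^r g_i(\theta)\partial^i$, relation~\eqref{eq:euler} gives $\phi(g\cdot m) = \partial g\, m = \sigma(g)\,\partial m = \sigma(g)\,\phi(m)$ for $g\in k(\theta)$, so $\phi$ is $\sigma$-semilinear. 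Since $\partial^p$ is central, the $p$-curvature of $L$ is exactly $\phi^p$ (and it is genuinely $k(\theta)$-linear, consistently with $\sigma^p=\mathrm{id}$). Moreover $\phi(\partial^j)=\partial^{j+1}$ for $0\le j\le r-2$, while modulo $L$ one has $\partial^r = -\sum_{i=0}^{r-1}(g_i/g_r)\,\partial^i$; hence the matrix of $\phi$ in the basis $\mathbf{e}$ is precisely the companion matrix $\mathbf{B}(\theta)$ of $L$.

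\textbf{Step 2: iterating a semilinear map.} If $\psi$ is any $\sigma$-semilinear endomorphism of a finite-dimensional $k(\theta)$-vector space with matrix $C$ in a fixed basis, then in coordinates $\psi$ sends a column vector $v$ to $C\cdot{}^{\sigma}\!v$, where ${}^{\sigma}\!v$ denotes the entrywise action of $\sigma$. Because $\sigma$ is a ring homomorphism, ${}^{\sigma}(AB) = {}^{\sigma}\!A\cdot{}^{\sigma}\!B$, so a straightforward induction yields that $\psi^n$ sends $v$ to $C\cdot{}^{\sigma}\!C\cdots{}^{\sigma^{n-1}}\!C\cdot{}^{\sigma^n}\!v$. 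Taking $n=p$ and using $\sigma^p=\mathrm{id}$, the matrix of $\psi^p$ in the given basis is $C\cdot{}^{\sigma}\!C\cdots{}^{\sigma^{p-1}}\!C$.

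\textbf{Step 3: conclusion.} Apply Step 2 with $\psi=\phi$ and $C=\mathbf{B}(\theta)$. Since ${}^{\sigma^j}\mathbf{B}(\theta)$ is obtained by substituting $\theta\mapsto\theta+j$ in every entry, that is ${}^{\sigma^j}\mathbf{B}(\theta)=\mathbf{B}(\theta+j)$, we get that the matrix of $\phi^p$, namely $\mathbf{B}_p(L)$, equals $\mathbf{B}(\theta)\,\mathbf{B}(\theta+1)\cdots\mathbf{B}(\theta+p-1)$, as claimed. The only real subtlety here—the ``main obstacle''—is to resist writing the matrix of $\phi^p$ as $\mathbf{B}(\theta)^p$: one must keep careful track of the fact that $\phi$ is $\sigma$-semilinear, so that each successive factor in the product gets shifted by one more unit in~$\theta$; the remainder of the argument is routine bookkeeping.
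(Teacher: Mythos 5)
Your proof is correct and follows exactly the route the paper intends (the paper omits the argument, calling the lemma ``easy''): identify left multiplication by $\partial$ on $M_L$ as a $\sigma$-semilinear endomorphism with matrix the companion matrix $\mathbf{B}(\theta)$, iterate it $p$ times while tracking the accumulating twist, and use $\sigma^p=\mathrm{id}$ to conclude that $\partial^p$ acts linearly with matrix $\mathbf{B}(\theta)\mathbf{B}(\theta+1)\cdots\mathbf{B}(\theta+p-1)$. All the steps, including the well-definedness of the map on the quotient and the order of the factors in the product, check out.
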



\begin{rem}
  It may happen that the same differential operator $L$ makes sense in
  both rings $k(x)\langle\partial\rangle$ and
  $k(\theta)\langle\partial \rangle$.  In that case, one should be
  very careful that the $p$-curvature computed in $k(x)\langle\partial\rangle$ has in
  general nothing to do with the $p$-curvature computed in $k(\theta)\langle\partial \rangle$. 
  For instance, they might have different sizes.  If confusion
  may arise, we shall speak about ``$p$-curvature with respect to
  $(x,\partial)$'' and ``$p$-curvature with respect to
  $(\theta,\partial)$'' respectively.
\end{rem}

Keeping our $L$ in $k(\theta)\langle\partial\rangle$, we set:
$$\Chi_{\theta,\partial}(L) = g_r(\theta) \cdot g_r(\theta+1) \cdots
g_r(\theta + p - 1) \cdot \chi(\mB_p(L))(\partial^p),$$ where $\chi$
refers to the characteristic polynomial and $g_r(\theta)$ denotes the
leading coefficient of $L$. The three properties of Lemma
\ref{lem:propChi} extend readily to this new setting. Using
multiplicativity, the function $\Chi_{\theta,\partial}$ can be
extended to $k(\theta)\langle \partial^{\pm 1} \rangle$.

\begin{lem}\label{lemma:central}
  The function $\Chi_{\theta,\partial}$ takes its values in the centre
  of $k(\theta)\langle\partial^{\pm 1}\rangle$, that is $k(\theta^p -
  \theta)[\partial^{\pm p}]$.
\end{lem}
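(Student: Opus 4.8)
The plan is to reduce the statement to the two most basic facts available: that $\Chi_{\theta,\partial}(\partial) = \partial^p$ by definition of the extension, and that the coefficients of $\chi(\mB_p(L))$ as a polynomial in $\partial^p$ lie in $k(\theta^p - \theta)$. Since $\Chi_{\theta,\partial}$ is multiplicative and every element of $k(\theta)\langle\partial^{\pm 1}\rangle$ can be written as $L\cdot\partial^{-n}$ with $L \in k(\theta)\langle\partial\rangle$, it suffices to treat $L \in k(\theta)\langle\partial\rangle$ and check that $\Chi_{\theta,\partial}(L)$, as written, is a polynomial in $\partial^p$ whose coefficients lie in $k(\theta^p-\theta)$; the factor $\partial^{-pn}$ then only affects the power of $\partial^p$ and stays inside $k(\theta^p-\theta)[\partial^{\pm p}]$.

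So the core claim to establish is twofold. First, $\Chi_{\theta,\partial}(L)$ is visibly a polynomial in $\partial^p$: it equals $c(\theta)\cdot\chi(\mB_p(L))(\partial^p)$ where $c(\theta) = g_r(\theta)g_r(\theta+1)\cdots g_r(\theta+p-1)$ and $\chi(\mB_p(L))(Y)$ is a univariate polynomial, so substituting $Y = \partial^p$ gives an element of $k(\theta)[\partial^p]$. Hence it lies in $k(\theta)[\partial^{\pm p}]$, and commuting with $\partial^{\pm p}$ is automatic since $\partial^p$ is central. The only remaining point is that its coefficients — which are $c(\theta)$ times the coefficients of $\chi(\mB_p(L))$ — actually lie in $k(\theta^p - \theta)$, equivalently are invariant under the substitution $\theta \mapsto \theta+1$ (this is the shift generating the relevant subfield, since $k(\theta)$ is a degree-$p$ extension of $k(\theta^p-\theta)$ and $\theta\mapsto\theta+1$ has order $p$ fixing exactly $k(\theta^p-\theta)$).

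For the shift-invariance: I would use Lemma~\ref{lem:pcurvtheta}, which gives $\mB_p(L) = \mB(\theta)\mB(\theta+1)\cdots\mB(\theta+p-1)$. Applying the shift $\theta\mapsto\theta+1$ to this product yields $\mB(\theta+1)\cdots\mB(\theta+p)$; since $g_r$ and all entries of $\mB$ are polynomials over $k$ with $g_r(\theta)^p = g_r(\theta^p)$-type relations unavailable, I instead note that $\mB(\theta+p) = \mB(\theta)$ is false pointwise but becomes true after we observe that the characteristic polynomial is conjugation-invariant: the shifted product is $\mB(\theta)^{-1}\bigl(\mB(\theta)\mB(\theta+1)\cdots\mB(\theta+p-1)\bigr)\mB(\theta)$ only if $\mB(\theta+p)=\mB(\theta)$, which holds because the entries of $\mB$ are polynomials in $\theta$ and $\theta\mapsto\theta+p = \theta$ in characteristic $p$ — indeed $(\theta+p)^j = \theta^j$ since $p=0$ in $k$. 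Thus the shifted $\mB_p(L)$ is conjugate (by $\mB(\theta)$) to $\mB_p(L)$, so $\chi(\mB_p(L))$ is shift-invariant; and $c(\theta)$ is manifestly shift-invariant by the same telescoping, since the shifted product $g_r(\theta+1)\cdots g_r(\theta+p)$ equals $g_r(\theta+1)\cdots g_r(\theta+p-1)g_r(\theta) = c(\theta)$ using $g_r(\theta+p)=g_r(\theta)$.

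The main obstacle, and the only place requiring care, is the observation that polynomials in $\theta$ over a characteristic-$p$ field are invariant under $\theta\mapsto\theta+p$; everything else is formal unwinding of definitions and the multiplicativity already granted. One should also double-check that passing from $L\in k(\theta)\langle\partial\rangle$ to a general element $L\partial^{-n}$ does not introduce denominators outside $k(\theta^p-\theta)$: but $\Chi_{\theta,\partial}(L\partial^{-n}) = \Chi_{\theta,\partial}(L)\cdot\partial^{-pn}$ and we have just shown $\Chi_{\theta,\partial}(L) \in k(\theta^p-\theta)[\partial^p]$, so the product lies in $k(\theta^p-\theta)[\partial^{\pm p}]$, which is exactly the asserted centre.
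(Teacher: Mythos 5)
Your proof is correct and follows essentially the same route as the paper's: shift-invariance of the scalar factor $g_r(\theta)\cdots g_r(\theta+p-1)$ by cyclicity, and shift-invariance of $\chi(\mB_p(L))$ because the shifted factorial $\mB(\theta+1)\cdots\mB(\theta+p)$ is conjugate to the original one via $\mB(\theta+p)=\mB(\theta)$. The only point to watch is that $\mB(\theta)$ is singular when $L$ has vanishing constant coefficient, so the conjugation is not literally available there; one then invokes $\chi(AB)=\chi(BA)$ for arbitrary square matrices (or factors out $\partial$ by multiplicativity) --- a degenerate case the paper's own one-word ``similar'' glosses over in exactly the same way.
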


\begin{proof}
  Pick some $L \in k(\theta)\langle\partial\rangle$ and denote by 
  $g_r(\theta)$ its leading coefficient. Clearly, the product 
  $$g_r(\theta) \cdot g_r(\theta+1) \cdots g_r(\theta + p - 1)$$ is
  invariant under the substitution $\theta \mapsto \theta + 1$ and
  thus can be written as a rational fraction in $\theta^p - \theta$.
  In the same way, the matrix $\mB_p(L)$ is similar to the same matrix
  where we have made the substitution $\theta \mapsto \theta +
  1$. This implies that all the coefficients of $\chi(\mB_p(L))$ are
  invariant under $\theta \mapsto \theta + 1$. Therefore as before,
  they are rational fractions in $\theta^p - \theta$. 
\end{proof}


\subsection{A comparison theorem}

The aim of this section is to show that the two maps
$\Chi_{x,\partial}$ and $\Chi_{\theta,\partial}$ defined above
coincide on $k[x]\langle\partial^{\pm 1}\rangle \simeq
k[\theta]\langle\partial^{\pm 1}\rangle$.

\smallskip\noindent{\bf \rm Comparison with a matrix algebra.}
In order to simplify notations, we will use the letter~$\D$ to
denote the ring $k[\theta]\langle\partial^{\pm 1}\rangle$. The centre
of $\D$ is $k[\theta^p - \theta][\partial^{\pm p}]$; we denote it by
$\cen$. We consider the ring extension $\cen[T]$ where $T$ is a new
variable satisfying the equation $T^p - T = \theta^p - \theta$. In a
slight abuse of notation, we shall write $\D[T]$ for $\cen[T]
\otimes_\cen \D$. We emphasize that by definition, the adjoined element 
$T$ lies in the centre of $\D[T]$.
We endow $\cen[T]$ and $\D[T]$ with an action of the cyclic additive
group $\F_p$ by letting $a$ act on $T$ as $T+a$ (and acting
trivially on $\D$). It is easily seen that the set of fixed points of
$\cen[T]$ (resp.\ $\D[T]$) under the above action is~$\cen$
(resp.~$\D$). We introduce the two matrices over $\cen[T]$:
$$\M(\theta) = \left ( 
\begin{smallmatrix} 
T \\ & T+1 \vspace{-0.5em} \\ & & \ddots \\ &&& T+p-1
\end{smallmatrix} \right ) \;\; \text{~and~} \;\;
\M(\partial) =\left ( 
\begin{smallmatrix} 
& 1  \vspace{-0.5em} \\
&& \ddots\\
&&& 1\\
\partial^p
\end{smallmatrix}
\right ).$$
We check that $\M(\partial)\M(\theta) = (\M(\theta) + 1) \M(\partial)$. 
As a consequence $\M$ uniquely entends to a ring morphism $\M : \D[T] \to 
\mathscr{M}_p(\cen[T])$. Moreover if $L$ lies in $\D[T]$ and is written 
as
$$L = \sum_{0 \leq i,j < p} a_{i,j} \theta^i \partial^j
\quad \text{with } a_{ij} \in \cen[T]$$
a closed formula for $\M(L)$ exists: it is the $(p \times p)$ matrix 
whose $(i',j')$ entry ($0 \leq i',j' < p)$ is
\begin{equation}
  \label{eq:defM}
  \M(L)_{i',j'} = \partial^{i' - j' + r} \cdot \sum_{i = 0}^{p-1}
  a_{i,r} \cdot (T+j')^i
\end{equation}
where $r$ denotes the remainder in the Euclidean division of $j' - 
i'$ by $p$. 

\begin{prop}
  \label{prop:azumaya}
  The map $\M : \D[T] \to \mathscr{M}_p(\cen[T])$ is an isomorphism of $\cen[T]$-algebras.
\end{prop}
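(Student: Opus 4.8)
The plan is to establish that $\M$ is a well-defined $\cen[T]$-algebra homomorphism, and then prove it is bijective by a dimension (rank) count over $\cen[T]$ together with a surjectivity argument. First I would verify well-definedness: one checks directly that the two matrices $\M(\theta)$ and $\M(\partial)$ satisfy the defining relation $\M(\partial)\M(\theta) = (\M(\theta)+1)\M(\partial)$ of $\D$, that $\M(\partial)$ is invertible with inverse having $\partial^{-p}$ in the corner (so $\partial^{\pm 1}$ both have images), and that the central scalar $\theta^p-\theta \in \cen$ maps to $(T^p-T)\cdot\mathrm{Id} = (\theta^p-\theta)\cdot\mathrm{Id}$, consistently with viewing everything as a $\cen[T]$-algebra where $T$ acts centrally. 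This makes $\M$ extend uniquely to a $\cen[T]$-algebra morphism $\D[T]\to\mathscr M_p(\cen[T])$, and one records the closed formula \eqref{eq:defM} for $\M(L)$, obtained by multiplying out $\M(\theta)^i\M(\partial)^j$.

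Next I would compute ranks. As a module over its centre $\cen$, the ring $\D = k[\theta]\langle\partial^{\pm 1}\rangle$ is free of rank $p^2$ with basis $\{\theta^i\partial^j : 0\le i,j<p\}$ (since $\theta^p-\theta$ and $\partial^{\pm p}$ generate the centre, $\partial^j$ for $0\le j<p$ spans over $k[\theta^p-\theta][\partial^{\pm p}]$, and $\theta^i$ for $0\le i<p$ spans over $k[\theta^p-\theta]$). Base-changing along $\cen\hookrightarrow\cen[T]$, which is free of rank $p$ (as $T^p-T=\theta^p-\theta$ is an Eisenstein-type monic relation of degree $p$ making $\cen[T]$ free over $\cen$), we get that $\D[T] = \cen[T]\otimes_\cen\D$ is free over $\cen[T]$ of rank $p^2$. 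The target $\mathscr M_p(\cen[T])$ is also free of rank $p^2$ over $\cen[T]$. So $\M$ is a $\cen[T]$-linear map between free modules of the same finite rank $p^2$.

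Then I would prove surjectivity, which forces $\M$ to be an isomorphism: a surjective endomorphism-like map between free modules of the same finite rank over a commutative ring is injective, hence bijective. For surjectivity, I would show the image contains enough matrices to generate $\mathscr M_p(\cen[T])$ as a $\cen[T]$-algebra. Concretely: from $\M(\theta) = \mathrm{diag}(T, T+1,\dots,T+p-1)$ and the fact that $T, T+1,\dots,T+p-1$ are pairwise distinct modulo no proper ideal is too strong — instead, use the Vandermonde-type argument that $\cen[T]$-linear combinations of $\M(\theta)^0,\M(\theta)^1,\dots,\M(\theta)^{p-1}$ span all diagonal matrices over $\cen[T]$, since the relevant Vandermonde determinant $\prod_{i<j}\big((T+j)-(T+i)\big) = \prod_{i<j}(j-i)$ is a nonzero element of $\F_p^\times$, hence a unit. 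Having all diagonal matrices $E_{jj}$ in the image, multiplying by powers of $\M(\partial)$ (which cyclically permutes coordinates, up to the unit $\partial^{\pm p}$) produces all matrix units $E_{i'j'}$, using the closed formula \eqref{eq:defM} with the appropriate choice of monomial $\theta^i\partial^j$. Since the $E_{i'j'}$ generate $\mathscr M_p(\cen[T])$ as a $\cen[T]$-module, $\M$ is surjective.

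The main obstacle is the surjectivity/rank bookkeeping: one must be careful that $\cen[T]$ is the right base ring (the paper's slightly unusual convention $\D[T] := \cen[T]\otimes_\cen\D$ with $T$ central), that $\cen[T]$ is indeed free of rank $p$ over $\cen$ with the Vandermonde unit genuinely invertible in $\F_p$ (this is where characteristic $p$ and the differences $j-i$ with $1\le j-i\le p-1$ being nonzero mod $p$ enter crucially), and that the corner entry $\partial^p$ is a unit in $\cen[T]$ so that $\M(\partial)$ is invertible and the permutation argument goes through cleanly. Once these are in place, the equality of ranks plus surjectivity immediately gives that $\M$ is an isomorphism of $\cen[T]$-algebras.
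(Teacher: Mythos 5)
Your proposal is correct and follows essentially the same route as the paper: both arguments hinge on the freeness of $\D[T]$ over $\cen[T]$ with basis $\{\theta^i\partial^j\}_{0\le i,j<p}$, on the invertibility of $\partial^p$ in $\cen[T]$, and on the Vandermonde determinant $\prod_{0\le a<b<p}(a-b)\in\F_p^\star$ being a unit. The only (cosmetic) difference is that the paper inverts the Vandermonde system directly to read the $a_{i,j}$'s off the entries of $\M(L)$, giving bijectivity in one step, whereas you use the same Vandermonde unit to establish surjectivity via matrix units and then deduce injectivity from the equality of ranks over the commutative base ring.
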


\begin{proof}
  It is an exercise to check that $\M$ maps any element $a \in
  \cen[T]$ to $a \cdot \text{Id}$. Thanks to Eq.~\eqref{eq:defM},
  in order to prove that it is an isomorphism, we need
  to check that, knowing all $\M(L)_{i',j'}$'s (with $i'$ and $j'$
  varying in $\{0, \ldots, p-1\}$), one can recover uniquely all
  $a_{i,j}$'s (with again $i$ and $j$ varying in $\{0, \ldots,
  p-1\}$). From \eqref{eq:defM}, we see that, for any $r$, the $p$
  values $a_{i,r}$ satisfy a Vandermonde system with coefficients in
  $\cen[T]$ (recall that $\partial^p$ is invertible in this ring)
  whose determinant is:
  $$\prod_{0 \leq a < b < p} \big((T+a)-(T+b)\big)
  = \prod_{0 \leq a < b < p} \big(a-b)$$
  and hence belongs to $\F_p^\star$. Therefore they can be recovered 
  uniquely from the $\M(L)_{i',j'}$'s.
\end{proof}

\begin{cor}
\label{cor:azumaya}
The map $\M$ induces the following identifications:
{\small \begin{align*}
    k[\theta^p{-}\theta][\partial^{\pm p}][T] 
    \otimes_{k[\!\theta^p{-}\theta][\partial^{\pm p}]} 
    k[\theta]\langle \partial^{\pm 1} \rangle 
    &\! \simeq\!
    \mathscr{M}_p(k[\theta^p{-}\theta][\partial^{\pm p}][T]) \\
    k(\theta^p{-}\theta)[\partial^{\pm p}][T] \!
    \otimes_{k(\!\theta^p{-}\theta)[\partial^{\pm p}]} \!
    k(\theta)\langle \partial^{\pm 1} \rangle 
    &\! \simeq\!
    \mathscr{M}_p(k(\theta^p{-}\theta)[\partial^{\pm p}][T]) \\
    k[x^p][\partial^{\pm p}][T] \otimes_{k[x^p][\partial^{\pm p}]} 
    k[x]\langle \partial^{\pm 1} \rangle 
    &\! \simeq\!
    \mathscr{M}_p(k[x^p][\partial^{\pm p}][T]) \\
    k(x^p)[\partial^{\pm p}][T] \otimes_{k(x^p)[\partial^{\pm p}]} 
    k(x)\langle \partial^{\pm 1} \rangle 
    &\! \simeq\!
    \mathscr{M}_p(k(x^p)[\partial^{\pm p}][T])
\end{align*}}%
where, in the last two cases, $T$ satisfies $T^p - T = x^p
\partial^p$.
\end{cor}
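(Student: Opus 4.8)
The plan is to deduce Corollary~\ref{cor:azumaya} from Proposition~\ref{prop:azumaya} purely by base change and transport of structure, so that essentially no new computation is needed. First I would observe that the first identification in the corollary is almost literally the statement of Proposition~\ref{prop:azumaya}: unravelling the abuse of notation, $\D[T] = \cen[T] \otimes_\cen \D$ with $\cen = k[\theta^p-\theta][\partial^{\pm p}]$ and $\D = k[\theta]\langle\partial^{\pm 1}\rangle$, and $\M : \D[T] \xrightarrow{\sim} \mathscr{M}_p(\cen[T])$ is precisely the claimed isomorphism of $k[\theta^p-\theta][\partial^{\pm p}][T]$-algebras. So the first line requires only that we spell out this translation.

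Next I would handle the second identification by extending scalars along the localization $k[\theta^p-\theta][\partial^{\pm p}] \hookrightarrow k(\theta^p-\theta)[\partial^{\pm p}]$, i.e. inverting all nonzero elements of $k[\theta^p-\theta]$. Tensoring the first isomorphism with $k(\theta^p-\theta)[\partial^{\pm p}]$ over $\cen$ is exact and commutes with forming $\mathscr{M}_p(-)$ and with adjoining $T$, since $\mathscr{M}_p(R) \otimes_R R' \simeq \mathscr{M}_p(R')$ and $R[T]/(T^p-T-c) \otimes_R R' \simeq R'[T]/(T^p-T-c)$ for any ring map $R \to R'$. Concretely, on the left-hand side this replaces $k[\theta]\langle\partial^{\pm1}\rangle$ by $k(\theta^p-\theta)[\partial^{\pm p}] \otimes_{k[\theta^p-\theta][\partial^{\pm p}]} k[\theta]\langle\partial^{\pm1}\rangle$, which is exactly $k(\theta)\langle\partial^{\pm1}\rangle$ because $k(\theta)$ is obtained from $k[\theta]$ by inverting the nonzero polynomials in $\theta^p-\theta$ (every nonzero element of $k[\theta]$ divides such a polynomial, as $k[\theta]$ is finite free of rank $p$ over $k[\theta^p-\theta]$, so its norm down to $k[\theta^p-\theta]$ is a nonzero multiple that becomes invertible). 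On the right-hand side it replaces the base ring by $k(\theta^p-\theta)[\partial^{\pm p}][T]$. This yields the second line.

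For the last two identifications, I would invoke the ring isomorphism $k[x]\langle\partial^{\pm1}\rangle \simeq k[\theta]\langle\partial^{\pm1}\rangle$ from Section~\ref{ssec:euler}, sending $x\partial \mapsto \theta$ and $\partial^{\pm1}\mapsto\partial^{\pm1}$, under which the central element $x^p\partial^p$ corresponds to $\theta^p-\theta$ (as already noted in the text), hence $k[x^p][\partial^{\pm p}]$ corresponds to $k[\theta^p-\theta][\partial^{\pm p}]$. Transporting the first identification along this isomorphism gives the third line, with $T$ now satisfying $T^p - T = x^p\partial^p$; and then localizing along $k[x^p][\partial^{\pm p}] \hookrightarrow k(x^p)[\partial^{\pm p}]$ exactly as before—noting $k(x)$ is the localization of $k[x]$ at the nonzero elements of $k[x^p]$, equivalently at $k[x^p]\setminus\{0\}$ since $k[x]$ is finite over $k[x^p]$—gives the fourth line.

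The only genuinely nonroutine point, and the one I would state as a small lemma if space allowed, is the identification of the localized tensor products with the ``field-coefficient'' rings: that $k(\theta^p-\theta)[\partial^{\pm p}] \otimes_{k[\theta^p-\theta][\partial^{\pm p}]} k[\theta]\langle\partial^{\pm1}\rangle \simeq k(\theta)\langle\partial^{\pm1}\rangle$ (and its $x$-analogue). This hinges on the fact that $k[\theta]$ is a free $k[\theta^p-\theta]$-module of rank $p$, so that inverting $k[\theta^p-\theta]\setminus\{0\}$ in $k[\theta]$ already inverts everything nonzero and produces the full fraction field $k(\theta)$; the skew part $\partial^{\pm1}$ just goes along for the ride since localization at central elements commutes with the twisted multiplication. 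Everything else is formal nonsense about base change of matrix algebras and of the $T$-extension, so I do not anticipate any real obstacle.
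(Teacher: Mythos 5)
Your proposal is correct and follows essentially the same route as the paper: line one is Proposition~\ref{prop:azumaya} restated, line two is scalar extension from $k[\theta^p-\theta]$ to $k(\theta^p-\theta)$, line three is transport along $k[x]\langle\partial^{\pm1}\rangle \simeq k[\theta]\langle\partial^{\pm1}\rangle$ (which identifies the centres), and line four is another scalar extension. The only difference is that you spell out the justification that localizing at $k[\theta^p-\theta]\setminus\{0\}$ recovers $k(\theta)$ (via finiteness of $k[\theta]$ over $k[\theta^p-\theta]$ and the norm), a point the paper leaves implicit.
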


\begin{proof}
  The first isomorphism is Proposition~\ref{prop:azumaya}; the second
  one follows by extending scalars from $k[\theta^p-\theta]$ to
  $k(\theta^p-\theta)$.  The third one follows from the identification
  $k[x]\langle\partial^ {\pm 1}\rangle \simeq
  k[\theta]\langle\partial^{\pm 1}\rangle$ which also identifies the
  centres $k[x^p][\partial^{\pm p}]$ and
  $k[\theta^p{-}\theta][\partial^{\pm p}]$.  The last isomorphism
  follows from the third one by extending scalars from $k[x^p]$ to
  $k(x^p)$.
\end{proof}

\noindent{\bf \rm The map $\Chi_{\theta,\partial}$ as a determinant.}
Let us recall that in \S \ref{ssec:defChi} we have defined a map:
$$\Chi_{\theta,\partial} : k(\theta)\langle\partial^{\pm 1}\rangle
\to k(\theta^p - \theta)[\partial^{\pm p}].$$
Using Corollary \ref{cor:azumaya}, one can 
define another map having the same domain and codomain, as follows. 
We denote by
$$\begin{array}{ll} \norm : &
k(\theta^p{-}\theta)[\partial^{\pm p}][T]
\otimes_{k(\!\theta^p{-}\theta)[\partial^{\pm p}]}
k(\theta)\langle \partial^{\pm 1} \rangle \smallskip \\
& \hspace{9em} \longrightarrow
k(\theta^p{-}\theta)[\partial^{\pm p}][T]
\end{array}$$
the map obtained by composing the second isomorphism of Corollary 
\ref{cor:azumaya} with the determinant map. 

\begin{lem}
  $\norm$ commutes with the action of $\F_p$.
\end{lem}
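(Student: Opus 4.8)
The plan is to show that, although the isomorphism $\M$ of Proposition~\ref{prop:azumaya} — or rather its scalar extension, the second isomorphism of Corollary~\ref{cor:azumaya} — is \emph{not} itself $\F_p$-equivariant (the action on the source is trivial on $\theta$ and $\partial$, whereas the action on the matrix algebra is not), it does intertwine the two $\F_p$-actions up to an \emph{inner} automorphism of $\mathscr{M}_p\big(k(\theta^p{-}\theta)[\partial^{\pm p}][T]\big)$. Since the determinant is insensitive to inner automorphisms and commutes with ring automorphisms of its coefficient ring, the composite $\norm$ will come out equivariant.

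First I would translate the action through $\M$. Write $\rho_a$ for the automorphism attached to $a \in \F_p$, both on the source and, entrywise, on $\mathscr{M}_p\big(k(\theta^p{-}\theta)[\partial^{\pm p}][T]\big)$; it fixes $\theta$ and $\partial$ and sends $T \mapsto T+a$ (this is a well-defined ring automorphism since $(T+a)^p-(T+a)=T^p-T$ in characteristic $p$). On the matrix side this gives $\rho_a(\M(\theta)) = \M(\theta) + a\cdot\text{Id}$, because the $j$-th diagonal entry $T+j$ becomes $T+j+a$, and $\rho_a(\M(\partial)) = \M(\partial)$, because $\M(\partial)$ involves no occurrence of $T$.

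Next comes the crux: producing the conjugator. I would set $U_a := \M(\partial)^{-a}$, which lies in $\GL_p\big(k(\theta^p{-}\theta)[\partial^{\pm p}][T]\big)$ since $\M(\partial)$ is invertible (indeed $\M(\partial)^p = \partial^p\cdot\text{Id}$ with $\partial^p$ a unit, which also makes $U_a$ depend only on $a \bmod p$). Iterating the relation $\M(\partial)\M(\theta) = (\M(\theta)+1)\M(\partial)$ recorded just before Proposition~\ref{prop:azumaya} gives $\M(\partial)^{-a}\M(\theta)\M(\partial)^{a} = \M(\theta) - a\cdot\text{Id}$, hence
$$U_a\,\rho_a(\M(\theta))\,U_a^{-1} = U_a\big(\M(\theta)+a\cdot\text{Id}\big)U_a^{-1} = \M(\theta) = \M(\rho_a(\theta));$$
moreover $U_a\,\rho_a(\M(\partial))\,U_a^{-1} = U_a\M(\partial)U_a^{-1} = \M(\partial) = \M(\rho_a(\partial))$ because $U_a$ is a power of $\M(\partial)$, and $U_a\,\rho_a(\M(T))\,U_a^{-1} = (T+a)\cdot\text{Id} = \M(\rho_a(T))$ because $T$ is central. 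The two ring homomorphisms $z \mapsto \M(\rho_a(z))$ and $z \mapsto U_a\,\rho_a(\M(z))\,U_a^{-1}$ from the source to $\mathscr{M}_p\big(k(\theta^p{-}\theta)[\partial^{\pm p}][T]\big)$ therefore agree on the generators $T,\theta,\partial^{\pm1}$, so they coincide; in other words $\M \circ \rho_a = \mathrm{Ad}(U_a) \circ \rho_a \circ \M$, where $\mathrm{Ad}(U_a)(X) = U_a X U_a^{-1}$.

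Finally I would apply the determinant. Using $\det(U_aXU_a^{-1}) = \det(X)$ and the fact that $\rho_a$, being a ring automorphism of $k(\theta^p{-}\theta)[\partial^{\pm p}][T]$, commutes with the formation of determinants of matrices over that ring, one obtains for every $z$ in the source
$$\norm(\rho_a(z)) = \det\!\big(\M(\rho_a(z))\big) = \det\!\big(U_a\,\rho_a(\M(z))\,U_a^{-1}\big) = \det\!\big(\rho_a(\M(z))\big) = \rho_a\!\big(\det\M(z)\big) = \rho_a\!\big(\norm(z)\big),$$
which is exactly the claim. The only genuinely delicate point is the construction of $U_a$: one must recognize that the defect of $\M$-equivariance is purely inner and that the Weyl-type commutation relation between $\M(\partial)$ and $\M(\theta)$ makes $\mathrm{Ad}\big(\M(\partial)^{-a}\big)$ the required correction; everything afterwards is formal.
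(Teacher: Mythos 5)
Your proof is correct and takes essentially the same approach as the paper: the paper likewise establishes $\M(\sigma(L)) = \M(\partial)^{-1}\,\sigma(\M(L))\,\M(\partial)$ for the generator $\sigma=\rho_1$ by checking agreement of the two ring morphisms on generators, and then takes determinants. The only cosmetic differences are that you handle a general $a\in\F_p$ (with conjugator $\M(\partial)^{-a}$) instead of reducing to the generator of the cyclic group, and you work directly after scalar extension rather than first over $k[\theta^p-\theta]$.
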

\begin{proof}
  Let $\sigma$ denote the mapping defined on $\D[T]$ by the identity
  on $\D$ and $T \mapsto T+1$; we extend it to $\mathscr{M}_p(\cen[T])$
  componentwise. It is enough to prove that for $L$ in $\D[T]$,
  $\norm(\sigma(L))=\sigma(\norm(L))$, since then it suffices to
  extend scalars to $k(\theta^p-\theta)$ to conclude. We are going to
  prove that for any such $L$, the equality $\M(\sigma(L)) =
  \M(\partial)^{-1} \sigma(\M(L)) \M(\partial)$ holds. Once this is
  established, taking determinants proves our claim.
  Since both mappings above are ring morphisms, it is enough to prove
  that they coincide for $L=a \in \cen[T]$, $L=\theta$ and
  $L=\partial$. In the first case, $\M(L) =a \cdot \text{Id}$ and
  $\M(\sigma(L))=\sigma(a)\cdot \text{Id}$, so the claim holds. The
  other cases follow by inspection.
\end{proof}

In particular, $\norm$ induces a map $k(\theta)\langle \partial^{\pm
  1} \rangle \to k(\theta^p{-}\theta)[\partial^{\pm p}]$ that, in a
slight abuse of notation, we still denote~$\norm$.  It is the
so-called \emph{reduced norm} map.

\begin{lem}
  \label{lem:norm}
  Let $L$ be in $k(\theta)\langle \partial^{\pm 1} \rangle$.
  \begin{enumerate}[(i)]
  \item If $L$ is in $k[\theta]\langle \partial \rangle$ of
    degree $r$ in $\partial$ and with
    coefficients of degree at most $d$ in $\theta$, then $\norm(L)$ is in
    $k[\theta^p-\theta][\partial^p]$ and has degree at most $d$ in
    $\theta^p - \theta$ and exactly $r$ in $\partial^p$.
  \item If $L$ lies in the centre $\cen$, then $\norm(L) = L^p$.
  \item If $L$ is irreducible in $k(\theta)\langle 
    \partial^{\pm 1}\rangle$, then $\norm(L)$ is a power of an irreducible element 
    of $\cen$.
  \end{enumerate}
  Besides, the map $\norm$ is multiplicative.
\end{lem}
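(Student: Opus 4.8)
The plan is to deduce all four assertions from the isomorphism $\M$ of Proposition~\ref{prop:azumaya} (equivalently Corollary~\ref{cor:azumaya}) together with the explicit shape of the matrix $\M(L)$ given by Eq.~\eqref{eq:defM}. For multiplicativity, I would argue that $\norm$ is, by construction, the composite of the ring isomorphism of Corollary~\ref{cor:azumaya} with the determinant on $\mathscr{M}_p$; both of these are multiplicative on the respective (noncommutative) rings into commutative targets, so $\norm$ is multiplicative on the full skew field $k(\theta)\langle\partial^{\pm1}\rangle$, and in particular on $k[\theta]\langle\partial\rangle$. This also means it suffices to check (i) on a generating family when convenient. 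For \emph{(ii)}, if $L\in\cen$ then $\M(L)=L\cdot\mathrm{Id}$ (this was already observed inside the proof of Proposition~\ref{prop:azumaya}), whence $\det(\M(L))=L^p$; one should note the element $L$ of $\cen$ is fixed by $\F_p$, so it really lands in $\cen$ and equals $L^p$ there.

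For \emph{(iii)}, I would mimic the contradiction argument used for Lemma~\ref{lem:propChi}(iii): assume $L$ irreducible in $k(\theta)\langle\partial^{\pm1}\rangle$ but $\norm(L)$ divisible by two distinct irreducibles $N_1,N_2$ of $\cen$ (or rather of $k(\theta^p-\theta)[\partial^{\pm p}]$). Since $\cen$ is a PID (it is a localization of a polynomial ring in one variable over a field, adjoining an invertible element), the $N_i$ are coprime, so after extending scalars and using that $\norm$ behaves like a characteristic-polynomial-type invariant, at least one $N_i$ is coprime to $L$ in the Azumaya algebra; by B\'ezout in the Euclidean ring $k(\theta)\langle\partial\rangle$ this forces $N_i(\partial^p)$ (resp.\ the relevant central generator) to act invertibly, contradicting that $N_i$ divides $\norm(L)=\det\M(L)$, which is the characteristic polynomial (up to the leading factor) of a central operator acting on the module $\D[T]$. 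Concretely it is cleaner to identify $\norm(L)$ with $\prod_j g_r(\theta+j)\cdot\chi(\mB_p(L))(\partial^p)$ — i.e.\ with $\Chi_{\theta,\partial}(L)$ up to the scalars already in $\cen$ — and then invoke Lemma~\ref{lem:propChi}(iii) as transported to the $\theta$-setting; in fact this identification is exactly what the comparison theorem of \S\ref{sec:pcurv} will assert, so \emph{(iii)} can be reduced to the already-noted analogue of Lemma~\ref{lem:propChi} over $k(\theta)\langle\partial\rangle$.

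The bulk of the work is \emph{(i)}, the bidegree bound, and this is the step I expect to be the main obstacle since it is the one genuinely using the combinatorial structure of Eq.~\eqref{eq:defM}. Writing $L=\sum_{0\le i<d',\,0\le j\le r} a_{i,j}\theta^i\partial^j$ with $a_{i,j}\in k$ and $d'\le d+1$, I would read off from \eqref{eq:defM} that each entry $\M(L)_{i',j'}$ is $\partial^{\,i'-j'+r}$ (with the exponent reduced mod $p$, possibly forcing a factor $\partial^{\pm p}$) times a polynomial of degree $\le d$ in $T$, hence of degree $\le d$ in $\theta^p-\theta=T^p-T$ once we take $\F_p$-invariants; since $\det$ is a sum of $p!$ products of $p$ such entries, the $\theta$-degree of $\norm(L)$ is $\le d$ — here one must be careful that the naive bound $pd$ collapses to $d$ because $\norm(L)\in k[\theta^p-\theta]$, so its $\theta$-degree is a multiple of $p$ and the only candidate below $p(d+1)$ compatible with a coefficient-size argument is $pd$ in $\theta$, i.e.\ degree $d$ in $\theta^p-\theta$; the honest way is to bound the $T$-degree of $\det\M(L)$ directly as $\le pd$ and note it is $\F_p$-invariant. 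For the $\partial^p$-degree, the diagonal term $\prod_{j'} \M(L)_{j',j'}$ contributes exactly $\partial^{\,pr}$ (up to a unit, using that the leading coefficient $g_r$ is nonzero so the relevant $a_{i,r}$ are not all zero), and every off-diagonal product has total $\partial$-exponent $\equiv 0\pmod p$ but, tracking the reductions mod $p$ in \eqref{eq:defM}, strictly smaller than $pr$; so the $\partial^p$-degree is exactly $r$. I would present the $\partial$-bookkeeping by the substitution $\partial\mapsto$ a formal grading variable and reduce to a statement about permanents/determinants of the exponent pattern, flagging that the one subtle point is the strict inequality for the non-diagonal terms, which follows because $L\in k[\theta]\langle\partial\rangle$ has no negative powers of $\partial$, so no entry can supply a compensating $\partial^{+p}$.
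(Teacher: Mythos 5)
Your route is essentially the paper's: multiplicativity comes from the determinant, \emph{(ii)} from $\M(L)=L\cdot\mathrm{Id}$, the $\theta^p-\theta$ bound in \emph{(i)} by bounding the $T$-degree of $\det\M(L)$ by $dp$ and converting via $T^p-T=\theta^p-\theta$, and \emph{(iii)} by a B\'ezout/invertibility contradiction. Two points need fixing. First, your ``cleaner'' alternative for \emph{(iii)} --- identifying $\norm$ with $\Chi_{\theta,\partial}$ and transporting Lemma~\ref{lem:propChi}\emph{(iii)} --- is circular: that identification is Proposition~\ref{prop:norm}, whose proof invokes precisely Lemma~\ref{lem:norm}\emph{(iii)} when it writes $\norm(L_i)=M_i^{m_i}\partial^{pm_i'}$ with $M_i$ irreducible. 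Keep only your direct argument, and state it in the right direction: coprimality of $N_1$ with $L$ gives $QL\equiv 1\pmod{N_1}$, so the \emph{image of $L$} is invertible in $\mathscr{M}_p\bigl(k(\theta^p-\theta)[\partial^{\pm p}][T]/N_1\bigr)$, hence $\norm(L)$ is a unit modulo $N_1$, contradicting $N_1\mid\norm(L)$; it is not ``$N_i(\partial^p)$ acting invertibly,'' and $\norm(L)$ is a determinant, not a characteristic polynomial of a central operator.

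Second, in the $\partial^p$-degree count the top term is \emph{not} the diagonal. From Eq.~\eqref{eq:defM}, $\M(L)_{i',j'}=\partial^{\,p[j'<i']}\sum_i a_{i,(j'-i')\bmod p}(T+j')^i$, so the identity permutation picks out the constant coefficient of $L$ in $\partial$ and contributes $\partial^0$; the term carrying $\partial^{pr}$ is the cyclic shift $\sigma(i')=i'+r\bmod p$, with coefficient $\pm\prod_{j'}g_r(T+j')\neq 0$. The strictness for all other permutations follows from the identity $\sum_{i'}\bigl((\sigma(i')-i')\bmod p\bigr)=p\cdot\#\{i':\sigma(i')<i'\}$ together with $(\sigma(i')-i')\bmod p\leq r$ (since $a_{i,j}=0$ for $j>r$), not from the absence of negative powers of $\partial$ (which only shows $\norm(L)\in k[\theta^p-\theta][\partial^p]$). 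The paper dismisses this with ``the rest of \emph{(i)} follows similarly,'' so if you give the argument you must get this bookkeeping right.
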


\begin{proof}
  Suppose first that $L$ is in $k[\theta]\langle \partial
  \rangle$.  In view of the shape of $\M(\theta)$ and $\M(\partial)$, it
  is clear that $\norm(L)$ involves no negative power in $\partial^p$. 
  Moreover, we see that
  $\M(\theta^i)$ can be written as a matrix with entries of degree $i$
  in $T$; if all coefficients of $L$ have degree at most $d$ in
  $\theta$, this implies that $\norm(L)$ can be written with
  coefficients of degree at most $dp$ in $T$. Since we know that 
  $\norm(L)$ lies in $k[\theta^p-\theta][ \partial ]$ and that $T$ 
  satisfies $T^p - T = \theta^p - \theta$, we find that $\norm(L)$ has 
  degree at most $d$ in $\theta^p - \theta$ as claimed. The rest of
  \emph{(i)} follows similarly.

  We have already seen that if $L \in \cen$, then $\M(L) = L \cdot 
  \text{Id}$ and therefore $\norm(L) = L^p$.


  We prove \emph{(iii)}. To simplify notation, set $\D' = k(\theta)\langle
  \partial^{\pm 1}\rangle$ and $\cen' =
  k(\theta^p-\theta)[\partial^{\pm p}]$. Let $L$ be an irreducible
  element of $\D'$. We assume by contradiction that there exist two
  distinct irreducible polynomials $N_1, N_2 \in \cen'$ that divide
  $\norm(L)$.  Then $N_1$ and $N_2$ are coprime in $\D'$. Thus one of
  these polynomials, say $N_1$, is coprime with $L$. By B\'ezout's
  Theorem, there exists $Q \in \D'$ such that $Q L \equiv 1
  \pmod{N_1}$. Thus the image of $L$ in $\D'[T] / N_1 \D'[T]$ is
  invertible in this ring. This implies that the image of $L$ in
  $\mathscr{M}_p(\cen'[T] / N_1 \cen'[T])$ (by the isomorphism of
  Corollary \ref{cor:azumaya} composed with the canonical projection)
  is invertible as well. Therefore $\norm(L)$ has to be invertible in
  $\cen'[T] / N_1 \cen'[T]$. But, on the other hand, we had assumed
  that $N_1$ divides $\norm(L)$. This is a contradiction.

  The multiplicativity of $\norm$ follows immediately from the 
  multiplicativity of the determinant.
\end{proof}

\begin{prop}
  \label{prop:norm}
  The two maps $\Chi_{\theta, \partial}$ and $\norm$ agree.
\end{prop}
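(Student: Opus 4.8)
Set $\D':=k(\theta)\langle\partial^{\pm1}\rangle$ and $\cen':=k(\theta^p-\theta)[\partial^{\pm p}]$. Both $\Chi_{\theta,\partial}$ and $\norm$ are multiplicative maps $\D'\to\cen'$ into a commutative domain, and they already agree on units: on $\partial$ both give $\partial^p$ (for $\norm$, $\det\M(\partial)=(-1)^{p-1}\partial^p=\partial^p$ in characteristic $p$), and on $g(\theta)\in k(\theta)^\times$ both give $\prod_{a\in\F_p}g(\theta+a)$ --- for $\norm$ because~\eqref{eq:defM} shows $\M(g(\theta))=\mathrm{diag}\bigl(g(T),g(T+1),\dots,g(T+p-1)\bigr)$, whose determinant is fixed by $T\mapsto T+1$ and hence equals $\prod_a g(\theta+a)$ under $T^p-T=\theta^p-\theta$. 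Since $k(\theta)\langle\partial\rangle$ is Euclidean, every nonzero operator of $\D'$ is, up to a unit $\partial^m$, a finite product of irreducible operators of $k(\theta)\langle\partial\rangle$, each of which is either a unit times $\partial$ or has nonzero constant term; by multiplicativity it therefore suffices to prove $\Chi_{\theta,\partial}(L)=\norm(L)$ for an irreducible $L\in k(\theta)\langle\partial\rangle$ of degree $r\ge1$ in $\partial$ with nonzero constant term, whose leading coefficient I denote $g_r(\theta)$.

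The first half of the argument shows that $\Chi_{\theta,\partial}(L)$ and $\norm(L)$ agree up to a factor in $k(\theta^p-\theta)^\times$. Both lie in $\cen'$ and are powers of irreducible elements of the principal ideal domain $\cen'$, by the extension of Lemma~\ref{lem:propChi}(iii) and by Lemma~\ref{lem:norm}(iii) respectively. Both are divisible by $L$ in $\D'$: for $\Chi_{\theta,\partial}(L)$ this is the extension of Lemma~\ref{lem:propChi}(ii); for $\norm(L)$, transport the adjugate identity $\M(L)\cdot\mathrm{adj}(\M(L))=\norm(L)\cdot\text{Id}$ through the isomorphism $\M$, noting that although $\M^{-1}(\mathrm{adj}(\M(L)))$ a priori lies in $\cen'[T]\otimes_{\cen'}\D'$, it equals $L^{-1}\norm(L)$ in $\mathrm{Frac}(\D')$ and hence lies in $\D'$ (by descent along the free extension $\D'\subseteq\cen'[T]\otimes_{\cen'}\D'$ with basis $1,T,\dots,T^{p-1}$). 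Because $L$ is irreducible, $\D' L$ is a maximal left ideal, so $\D'/\D' L$ is simple; the ideal $\cen'\cap\D' L$, which equals the annihilator in $\cen'$ of this simple module, is then prime (as $\cen'$ modulo it embeds into the division ring $\mathrm{End}_{\D'}(\D'/\D' L)$), nonzero since it contains $\Chi_{\theta,\partial}(L)$, and thus of the form $(N)$ with $N$ irreducible; since the units of $\cen'$ are $k(\theta^p-\theta)^\times\cdot\partial^{p\Z}$, we have $\deg_{\partial^p}N\ge1$. As $N$ divides both $\Chi_{\theta,\partial}(L)$ and $\norm(L)$ and each is a power of an irreducible of $\cen'$, we get $\Chi_{\theta,\partial}(L)=c_1 N^{e_1}$ and $\norm(L)=c_2 N^{e_2}$ with $c_1,c_2\in k(\theta^p-\theta)^\times$; as both have degree exactly $r$ in $\partial^p$ (Lemma~\ref{lem:propChi}(i) and Lemma~\ref{lem:norm}(i)), necessarily $e_1=e_2$.

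The second half pins down the scalar by comparing leading coefficients in $\partial^p$. For $\Chi_{\theta,\partial}(L)=\bigl(\prod_{a\in\F_p}g_r(\theta+a)\bigr)\cdot\chi(\mB_p(L))(\partial^p)$ that coefficient is $\prod_{a\in\F_p}g_r(\theta+a)$, the characteristic polynomial $\chi(\mB_p(L))$ being monic of degree $r$. For $\norm(L)=\det\M(L)$ I would extract it from~\eqref{eq:defM}. In $\det\M(L)=\sum_\sigma\mathrm{sgn}(\sigma)\prod_{i'=0}^{p-1}\M(L)_{i',\sigma(i')}$, the entry $\M(L)_{i',\sigma(i')}$ carries $\partial^p$ to the power $[\sigma(i')<i']$ (the exponent of the prefactor $\partial^{i'-\sigma(i')+((\sigma(i')-i')\bmod p)}$) plus $d_{i'}$, where $d_{i'}$ denotes the $\partial^p$-degree of the coefficient of $\partial^{(\sigma(i')-i')\bmod p}$ in $L$. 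Since $\sum_{i'}\bigl((\sigma(i')-i')\bmod p\bigr)=p\cdot\#\{i':\sigma(i')<i'\}$, and since for each $i'$ the integer $\bigl((\sigma(i')-i')\bmod p\bigr)+p\,d_{i'}$ is a $\partial$-exponent occurring in $L$ and hence $\le r$, the total power of $\partial^p$ in a nonvanishing term of the expansion is $\le r$, with equality only when $(\sigma(i')-i')\equiv r\pmod p$ for every $i'$, that is, only for the cyclic shift $\sigma\colon i'\mapsto i'+r\pmod p$. Its contribution to the coefficient of $(\partial^p)^r$ is $\mathrm{sgn}(\sigma)\prod_{j'=0}^{p-1}g_r(T+j')=\prod_{a\in\F_p}g_r(\theta+a)$, the sign being $1$ in characteristic $p$. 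Hence the leading coefficients of $\Chi_{\theta,\partial}(L)$ and $\norm(L)$ coincide, so $c_1=c_2$ and $\Chi_{\theta,\partial}(L)=\norm(L)$.

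The main obstacle is precisely the mismatch of sizes between the two objects being compared: $\norm$ is built from the $p\times p$ matrix $\M(L)$, whereas $\Chi_{\theta,\partial}$ is built from the $r\times r$ companion product $\mB_p(L)$ of Lemma~\ref{lem:pcurvtheta}. Rather than matching these matrices directly, the argument exploits that, $\D'/\D' L$ being simple, both invariants are governed by the single prime $(N)$ attached to that module, so that the whole comparison collapses to two scalar computations --- a degree count (the exponent of $N$) and a permutation--determinant count (the leading coefficient in $\partial^p$).
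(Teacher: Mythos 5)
Your route is genuinely different from the paper's, and most of it is sound, but there is one real gap, precisely at the sentence ``as both have degree exactly $r$ in $\partial^p$ \dots, necessarily $e_1=e_2$''. In the Laurent ring $\cen'=k(\theta^p-\theta)[\partial^{\pm p}]$ the element $\partial^p$ is a \emph{unit}, so being ``a power of an irreducible'' only determines $\Chi_{\theta,\partial}(L)=c_1N^{e_1}$ and $\norm(L)=c_2N^{e_2}$ with $c_1,c_2\in k(\theta^p-\theta)^\times\cdot\partial^{p\Z}$; your assertion that $c_1,c_2\in k(\theta^p-\theta)^\times$ is not justified for $c_2$. Normalize $N$ to be monic in $\partial^p$ with nonzero constant term, of degree $\delta$, and write $c_i=u_i\partial^{pm_i}$ with $m_i\ge 0$ (no negative powers occur), so that $m_i+\delta e_i=r$. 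For $\Chi_{\theta,\partial}(L)$ one gets $m_1=0$ because $\chi(\mB_p(L))$ has nonzero constant term (you reduced to $L$ with nonzero constant term, so $\mB(\theta)$ is invertible). But nothing in your argument rules out $m_2>0$, i.e.\ that $\norm(L)$, viewed as a polynomial in $\partial^p$, vanishes at $\partial^p=0$; in that case $e_2<e_1$, and your leading-coefficient match $u_1=u_2$ is still consistent with, say, $\Chi=(\partial^p+1)^2$ versus $\norm=\partial^p(\partial^p+1)$. This missing valuation bound is exactly the point the paper's proof is built around: it applies $\norm$ to the factorization $L\,L_1\cdots L_s=\Chi_{\theta,\partial}(L)$, uses $\norm(C)=C^p$ for central $C$ to obtain $\partial^{pm'}\prod_i M_i^{m_i}=N_0^{pn_0}$, and deduces $m'=0$ when $N_0\ne\partial^p$. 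The good news is that your own determinant-expansion technique closes the gap with no new idea: by the same counting, the coefficient of $(\partial^p)^0$ in $\det\M(L)$ is contributed only by $\sigma=\mathrm{id}$ and equals $\prod_{a\in\F_p}g_0(\theta+a)\ne 0$, forcing $m_2=0$. One added sentence makes the proof complete.

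For the rest, the comparison of methods is worth recording. The paper never needs a common irreducible divisor of $\Chi_{\theta,\partial}(L)$ and $\norm(L)$ obtained structurally: it gets the matching of irreducible supports for free from the relation $\partial^{pm'}\prod_i M_i^{m_i}=N_0^{pn_0}$, and then finishes by monicity and degree. You instead identify the common prime $N$ as the annihilator in $\cen'$ of the simple module $\D'/\D'L$ (a nice conceptual step, which requires your auxiliary adjugate-plus-descent argument that $L$ divides $\norm(L)$ in $\D'$ --- a fact used implicitly but not stated in the paper), and you replace the paper's monicity argument by a direct computation of the top coefficient of $\det\M(L)$ from Eq.~\eqref{eq:defM}. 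That computation is correct (including the sign $(-1)^{p-1}=1$ in characteristic $p$) and has independent value: it reproves that the $\partial^p$-leading coefficient of $\norm(L)$ is $\prod_a g_r(\theta+a)$ without invoking Proposition~\ref{prop:norm} itself. Your reduction to irreducibles with nonzero constant term also neatly removes the paper's case distinction $N_0=\partial^p$. So: different and viable, but incomplete as written at the one step identified above.
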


\begin{proof}
  Using multiplicativity and remarking that $\Chi_{\theta, \partial}$ 
  and $\norm$ both map $g(\theta) \in k[\theta]$ to $g(\theta) \cdot 
  g(\theta+1) \cdots g(\theta+p-1)$, we are reduced to prove that 
  $\Chi_{\theta, \partial}(L) = \norm(L)$ for any \emph{monic 
  irreducible} differential polynomial $L \in k(\theta)\langle 
  \partial \rangle$.

  Take such an $L$. Since $L$ divides $\Chi_{\theta, 
  \partial}(L)$, we can write 
  \begin{equation}
  \label{eq:factChiL}
  L \cdot L_1 L_2 \cdots L_s = \Chi_{\theta, \partial}(L)
  \end{equation}
  where $L_i$'s are monic irreducible differential operators. Set $L_0
  = L$. For $i \in \{0, \ldots, s\}$, we know that $\Chi_{\theta,
    \partial}(L_i) = N_i^{n_i}$ and $\norm(L_i) = M_i^{m_i}
  \partial^{pm'_i}$ where $N_i$ and $M_i$ are monic irreducible
  polynomials in $k(\theta^p - \theta)[\partial^p]$ and $n_i$, $m_i$
  and $m'_i$ are nonnegative integers with $n_i > 0$. Applying $\norm$
  to \eqref{eq:factChiL} gives
  \begin{equation}
  \label{eq:normChiL}
  \partial^{p m'} \cdot \prod_{i=0}^s M_i^{m_i} = N_0^{p n_0}
  \end{equation}
  where $m' = \sum_{i=0}^s m'_i$. Hence we can assume that $M_i =
  N_0$ for all $i$. Now, if $N_0 = \partial^p$, both $\Chi_{\theta, 
  \partial}(L)$ and $\norm(L)$ are powers of $\partial^p$ and we get
  the desired result by comparing degrees. On the contrary, if $N_0$
  is not $\partial^p$, Eq.~\eqref{eq:normChiL} implies that $m' = 0$ 
  and then that $m'_0 = 0$ as well. Thus $\Chi_{\theta, \partial}(L)$
  and $\norm(L)$ are both powers of $N_0$. Since they are monic and
  share the same degree, they need to be equal.
\end{proof}

\noindent{\bf\rm Consequences.}
We are now in position to prove the following theorem that compares
the maps $\Chi_{x,\partial}$ and $\Chi_{\theta,\partial}$.

\begin{theo}
  \label{theo:comparison}
  The following diagram commutes:
  $$\xymatrix @C=35pt @R=20pt {
    k[\theta]\langle\partial^{\pm 1}\rangle 
    \ar[r]^-{\Chi_{\theta, \partial}} 
    \ar[d]^-{\sim}_-{\theta \mapsto x \partial} &
    k[\theta^p - \theta][\partial^{\pm p}] 
    \ar[d]_-{\sim}^-{\theta^p - \theta \mapsto x^p \partial^p} \\
    k[x]\langle\partial^{\pm 1}\rangle \ar[r]^-{\Chi_{x, \partial}} &
    k[x^p][\partial^{\pm p}] 
  }$$
\end{theo}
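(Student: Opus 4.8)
The plan is to reduce the theorem to Proposition~\ref{prop:norm} by transporting the reduced norm $\norm$ through the isomorphism $k[x]\langle\partial^{\pm1}\rangle\simeq k[\theta]\langle\partial^{\pm1}\rangle$ of Section~\ref{ssec:euler}. Starting from the last isomorphism of Corollary~\ref{cor:azumaya} (the one with coefficients in $k(x^p)[\partial^{\pm p}][T]$ and $T^p-T=x^p\partial^p$) in place of the second one, one defines a map
$$\norm_x : k(x)\langle\partial^{\pm1}\rangle \longrightarrow k(x^p)[\partial^{\pm p}]$$
exactly as $\norm$ was defined: compose with the determinant, check as was done above for $\norm$ that the result commutes with the $\F_p$-action, and descend to the fixed points. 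Since the third and last isomorphisms of Corollary~\ref{cor:azumaya} are deduced from the first two precisely by transport through the ring isomorphism $k[x]\langle\partial^{\pm1}\rangle\simeq k[\theta]\langle\partial^{\pm1}\rangle$ — which also identifies the centres $k[x^p][\partial^{\pm p}]$ and $k[\theta^p-\theta][\partial^{\pm p}]$ and sends the adjoined element $T$ to $T$ — the restriction of $\norm_x$ to $k[x]\langle\partial^{\pm1}\rangle$ corresponds, under the two vertical isomorphisms of the theorem, to the restriction of $\norm$ to $k[\theta]\langle\partial^{\pm1}\rangle$. By Proposition~\ref{prop:norm} the latter equals $\Chi_{\theta,\partial}$, so the theorem is reduced to the assertion that $\Chi_{x,\partial}$ and $\norm_x$ agree on $k[x]\langle\partial^{\pm1}\rangle$; in fact it is no harder to show that they agree on all of $k(x)\langle\partial^{\pm1}\rangle$.

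To establish $\Chi_{x,\partial}=\norm_x$, I would repeat the proof of Proposition~\ref{prop:norm} almost verbatim, replacing $\theta$ by $x$ and $\theta^p-\theta$ by $x^p$: the three properties of Lemma~\ref{lem:propChi} and the $x$-analogues of Lemma~\ref{lem:norm} hold with the same proofs, both maps are multiplicative, and by multiplicativity — using the factorization of any element of $k(x)\langle\partial\rangle$ as a unit times a product of monic irreducible operators — one reduces to comparing the two maps on a unit $g\in k(x)^\times$, on $\partial$, and on a monic irreducible operator of $k(x)\langle\partial\rangle$. On $\partial$ both maps give $\partial^p$; the monic irreducible case is treated exactly as in Proposition~\ref{prop:norm} (factor $L\cdot L_1\cdots L_s=\Chi_{x,\partial}(L)$ with the $L_i$ monic irreducible, apply $\norm_x$, use that $\Chi_{x,\partial}$ and $\norm_x$ send an irreducible operator to a power of an irreducible element of $k(x^p)[\partial^{\pm p}]$, and compare degrees). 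The only step that is not a literal translation is the remark that $\Chi_{x,\partial}$ and $\norm_x$ both send $g\in k(x)$ to $g^p$: for $\Chi_{x,\partial}$ this holds because the quotient module attached to the order-zero operator $g$ is zero, so the renormalisation by $g^p$ gives $\Chi_{x,\partial}(g)=g^p$; and since $g^p$ lies in the centre $k(x^p)[\partial^{\pm p}]$, the multiplicativity of $\norm_x$ together with the analogue of Lemma~\ref{lem:norm}(ii) gives $\norm_x(g)^p=\norm_x(g^p)=g^{p^2}$, whence $\norm_x(g)=g^p$ because the $p$-th power map is injective on the integral domain $k(x^p)[\partial^{\pm p}]$.

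Most of the argument is routine bookkeeping once the set-up is in place; the one genuinely delicate point is the transport step of the first paragraph. The care required is that the isomorphism $k[x]\langle\partial^{\pm1}\rangle\simeq k[\theta]\langle\partial^{\pm1}\rangle$ does \emph{not} extend to an isomorphism between $k(x)\langle\partial^{\pm1}\rangle$ and $k(\theta)\langle\partial^{\pm1}\rangle$; hence the identification $\norm_x\leftrightarrow\norm$ has to be made first over the polynomial rings — where Section~\ref{ssec:euler} applies and matches the matrices $\M(\theta),\M(\partial)$ with their $x$-counterparts — and only afterwards propagated to $k(x)\langle\partial^{\pm1}\rangle$ by the scalar extension from $k[x^p]$ to $k(x^p)$.
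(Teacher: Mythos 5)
Your proof is correct and follows essentially the same route as the paper: the paper's own argument likewise invokes Proposition~\ref{prop:norm}, asserts that ``exactly in the same way'' the map $\Chi_{x,\partial}$ equals the determinant taken via the last isomorphism of Corollary~\ref{cor:azumaya}, and concludes by ``keeping trace of all the identifications''. If anything you are more explicit than the paper on the two delicate points, namely the need to transport the reduced norm through the polynomial-coefficient rings before extending scalars, and the one step of the proof of Proposition~\ref{prop:norm} that does not translate literally (the value on $g\in k(x)$, which becomes $g^p$ rather than a $p$-fold shifted product).
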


\begin{proof}
  By Proposition \ref{prop:norm}, we know that the image of an element
  $L \in k[\theta]\langle\partial^{\pm 1}\rangle$ under the map
  $\Chi_{\theta, \partial}$ is equal to the determinant of the matrix
  corresponding to $L$ \emph{via} the second isomorphism of Corollary
  \ref{cor:azumaya}.  Exactly in the same way, we prove that the image
  of an element $L \in k[x]\langle\partial^{\pm 1}\rangle$ under 
   $\Chi_{x, \partial}$ is equal to the determinant of the matrix
  corresponding to~$L$ \emph{via} the last isomorphism of Corollary~\ref{cor:azumaya}.  Keeping trace of all the identifications, the
  theorem follows. 
\end{proof}


\section{Algorithms}
\label{sec:algo}

This section describes our main algorithm. While the most natural
question is arguably to compute $\Chi_{x,\partial}(L)$ for an element
$L$ of $k[x]\langle \partial^{\pm 1} \rangle$, the formula that gives
$\Chi_{\theta,\partial}(L)$ for $L$ in $k[\theta]\langle \partial^{\pm
  1} \rangle$ of Lemma~\ref{lem:pcurvtheta} leads to a faster
algorithm than its counterpart in $x,\partial$.

As a consequence, we start by discussing conversion algorithms to 
rewrite an operator given in $k[x]\langle \partial^{\pm 1} \rangle$ to 
$k[\theta]\langle \partial^{\pm 1} \rangle$ (\S\ref{ssec:conversions}). 
We continue with algorithms to compute the matrix factorials that arise 
in Lemma~\ref{lem:pcurvtheta} (\S \ref{ssec:factorial}) and with a 
\emph{numerically stable} algorithm to compute the characteristic 
polynomial of a matrix over a ring of power series 
(\S\ref{ssec:charpoly}). Finally, in \S\ref{ssec:mainalgo}, we present 
our main algorithm.

The costs of all our algorithms are given in terms of operations in
$k$. We use standard complexity notation: $\Mpol:\N\to\N$ denotes a
function such that for any ring $A$, polynomials in $A[x]$ of degree
at most $m$ can be multiplied in $\Mpol(m)$ operations in $A$; $\Mpol$
must also satisfy the super-linearity conditions
of~\cite[Chapter~8]{GaGe03}. Using the Cantor-Kaltofen
algorithm~\cite{CaKa91}, one can take $\Mpol(m) = O(m \log(m)
\log\log(m))$.

Let $\omega$ be an exponent such that matrices of size $n$ over a ring 
$A$ can be multiplied in $O(n^\omega)$ operations in $A$; using the 
algorithms of~\cite{CoWi90,Williams12}, we can take $\omega \le 2.38$. 
We assume that $\omega > 2$, so that costs such as $\Mpol(n^2) \log(n)$ 
are negligible compared to~$n^\omega$.

Finally, the soft-O notation $O\tilde{~}(\ )$ indicates the omission
of polylogarithmic factors.


\subsection{Conversion algorithms}
\label{ssec:conversions}

\noindent{\bf\rm From $k[\theta]$ to $k[\theta^p-\theta]$.}
Take $f$ of degree $d$ in $k[\theta]$, and suppose that $f$ lies in
the subring $k[\theta^p-\theta]$ of $k[\theta]$. Thus, it can be
written as $f=\psi(\theta^p-\theta)$, for some $\psi$ in $k[Z]$ of
degree $e-1=d/p$. Our goal is to compute $\psi$.

Consider the power series $t = -Z-Z^p -Z^{p^2} -Z^{p^3}\cdots$ in
$k[[Z]]$; it satisfies the relation $t^p-t = Z$. As a result, in the
power series ring $k[[Z]]$, the equality $f(t) = \psi(Z)$ holds (the
composition $f(t)$ is well-defined, since $t$ has positive valuation).
Thus, to compute $\psi$, it is enough to compute $f(t) \bmod Z^{e}$,
for which only the knowledge of $f \bmod Z^e$ is needed. We will
call such an algorithm {\tt decompose\_central}.

In the common case where $e \le p$, $\psi$ is simply obtained as
$\psi=f(-Z) \bmod Z^{e}$, which is computed in time~$O(e)=O(d/p)$.  In
general, though, we are not able to compute $f(t) \bmod Z^{e}$ in time
quasi-linear in $e$ for the moment; one possible solution is
Bernstein's algorithm, with a running time of $O(p
\Mpol(e)\log(e))=O(\Mpol(d)\log(d))$ operations in
$k$~\cite{Bernstein98b}.

Remark that if $k$ is a finite field, and if we use a {\em boolean}
complexity model (which allows us to lift computations to $\Z$), the
Kedlaya-Umans composition algorithm~\cite{KeUm11} has a running time
almost linear in both $e$ and $\log(|k|)$.

\smallskip\noindent{\bf \rm From $k[x]\langle\partial\rangle$ to
  $k[\theta]$.} Take $f$ in $k[x]\langle \partial \rangle$, of the
form $f=\sum_{i = 0}^d f_i x^i \partial^i$. To rewrite $f$ in
$k[\theta]$, notice as in~\cite{BoSc05,BoChLe08} that this amounts to
multiplying the vector of coefficients of $f$ by the inverse of a
Stirling matrix, which can be done in time $O(\Mpol(d)\log(d))$. We
call this algorithm {\tt x\_d\_to\_theta}.

\smallskip\noindent{\bf \rm From $k[x]\langle\partial^{\pm 1}\rangle$
to $k[\theta]\langle\partial^{\pm 1}\rangle$.} Finally, we describe an
algorithm {\tt x\_d\_to\_theta\_d} that rewrites an operator given in
$k[x]\langle\partial^{\pm 1}\rangle$ on $k[\theta]\langle\partial^{\pm
  1}\rangle$. Take $L$ in $k[x]\langle \partial^{\pm 1}\rangle$, of
the form
$$L = f_{-s}(x) \partial^{-s} + \cdots + f_0(x) + \cdots + f_r(x)
\partial^r,$$ all $f_i$'s being in $k[x]$, of degree at most $d$.
For $i=-s,\dots,r$, let us write $f_i$ as $f_i=\sum_{0 \le j \le d}
f_{i,j} x^j$. Reordering coefficients, we can write 
$f = h_{-s-d} \partial^{-s-d} + \cdots + h_0 + \cdots + h_r
\partial^r,$
with $h_\ell = \sum_{j=0}^d f_{j+\ell,j} x^j \partial^j$ for all $\ell$.
We apply Algorithm {\tt x\_d\_to\_theta} to all $h_\ell$'s, 
allowing us to obtain $f$ as 
$$f = g_{-s-d}(\theta) \partial^{-s-d} + \cdots + g_0(\theta) + \cdots
+ g_r(\theta)\partial^r,$$ for a cost of $O((s+r+d)\Mpol(d)\log(d))$
operations in $k$.


\subsection{Matrix factorials} 
\label{ssec:factorial}

For an $(n \times n)$ matrix $\mB$ in $\mathscr{M}_n(k(\theta))$, and
for an integer $s$, we will denote by ${\rm Fact}(\mB, s)$ the product
$${\rm Fact}(\mB,s) = \mB(\theta) \cdot \mB(\theta+1) \cdots \mB(\theta+s-1).$$

In this paragraph, we describe an algorithm {\tt factorial} that does
the following: given a matrix $\mB$ in
$\mathscr{M}_n(k[\theta])$, with polynomial entries of degree less
than $m$, compute ${\rm Fact}(\mB,s) \bmod \theta^m$. Our main
interest will be in cases where $m \ll s$; our goal is to avoid the
cost linear in $s$ that would follow from computing the product in the
naive manner.

In the special case $n=1$ (so we consider a polynomial $B$ instead of
matrix $\mB$) and $s=p$ (which is the main value we will be interested
in), we are able to obtain a cost logarithmic in $p$. Consider indeed
the bivariate polynomial $P(\theta,\eta)= (\eta^p - \eta) - (\theta^p
- \theta)$. Then, ${\rm Fact}(B,p)$ is the resultant in $\eta$ of
$P(\theta,\eta)$ and $B(\eta)$. This resultant (as well as its
reduction modulo $\theta^m$) can be computed by first reducing
$\eta^p-\eta$ modulo $B$, with a cost polynomial in $\log(p)$. Note in
addition that if we consider $\theta^p - \theta$ instead of $\theta$
as the second variable, this method yields \emph{without any further
  computation} a writing of ${\rm Fact}(B,p)$ as a polynomial in
$\theta^p - \theta$.

Unfortunately, in the case $n >1$, the resultant approach used above
does not apply any longer; as a matter of fact, no solution is known
with cost polynomial in $\log(p)$.

We will rely on an approach pioneered by Strassen~\cite{Strassen76}
and the Chudnovsky's~\cite{ChCh88}, using baby steps/giant steps
techniques. This idea was revisited in~\cite{BoGaSc07}, and led to the
following result~\cite[Lemma~7]{BoClSa05}: provided $p > m$, one can
compute ${\rm Fact}(\mB, p) \bmod \theta^m$ using $\softO(n^\omega
m^{3/2} p^{1/2})$ operations in $k$ (that result is stated over a
finite field; in our case, we use it over $S=k[\theta]/\theta^m$, but
the algorithm still applies).

We present here a variant of these ideas, better adapted to our
context, with a slightly improved cost with respect to~$m$. In what
follows, we call {\tt shift} an algorithm such that ${\tt
  shift}(B,i)=B(\theta+i)$ (we will also use this notation for
matrices of polynomials); Algorithm {\tt shift} can be implemented
using $O(\Mpol(m)\log(m))$ operations in $k$~\cite{GaGe03}, if $\deg(B)\le m$.

\noindent\hrulefill

\noindent {\bf Algorithm} {\tt factorial\_square}

\noindent{\bf Input:} matrix $\mB$, integers $s, m$

\noindent{\bf Output:} ${\rm Fact}(\mB,s^2) \bmod \theta^m$

\smallskip\noindent 1.\ {\bf for} $i=0,\dots,s-1$, compute $\mB_i = {\tt shift}(\mB, i)$

{\sc Cost:} $O(n^2 s \Mpol(m)\log(m))$, since we call {\tt shift} $n^2 s$ times

\smallskip\noindent 2.\ compute $\mC = \mB_0 \cdots \mB_{s-1}$

{\sc Cost:} $O(n^\omega \Mpol(ms) \log(s))$ using~\cite[Algorithm~10.3]{GaGe03}

{\sc Remark:} $\mC = \mB(\theta) \cdot \mB(\theta+1) \cdots \mB(\theta+s-1)$

\smallskip\noindent 3.\ {\bf for} $i=0,\dots,s-1$, compute $\mC_i = \mC \bmod (\theta-si)^m$

{\sc Cost:} $O(n^2 \Mpol(ms) \log(s))$ using~\cite[Corollary~10.17]{GaGe03}

\smallskip\noindent 4.\ {\bf for} $i=0,\dots,s-1$, compute $\mD_i = {\tt shift}(\mC_i,si)$

{\sc Cost:} $O(n^2 s \Mpol(m)\log(m))$

{\sc Remark:} $\mD_i$ is also equal to $\mC(\theta+si) \bmod \theta^m$

\smallskip\noindent 5.\ {\bf return} $\mD_{0} \cdots \mD_{s-1} \bmod \theta^m$

{\sc Cost:} $O(n^\omega s \Mpol(m))$

\vspace{-1ex}\noindent\hrulefill

In view of the remarks made in the algorithm, we see that Algorithm
{\tt factorial\_square} computes ${\rm Fact}(\mB,s^2) \bmod \theta^m$
using $O(n^\omega \Mpol(ms) \log(ms))$ operations in $k$. 

This algorithm only deals with product lengths that are perfect squares.
In the general case, we will rely on the following (obvious) equality,
that holds for any integers $s, t$:
$${\rm Fact}(\mB,s+t) = {\rm Fact}(\mB,s) \cdot {\rm Fact}(\mB(\theta+s), t).$$ 
For an arbitrary $s$, this allows us to compute ${\rm Fact}(\mB,s) \bmod \theta^m$
using the base 4 decomposition of $s$ as follows.

\noindent\hrulefill

\noindent {\bf Algorithm} {\tt factorial}

\noindent{\bf Input:} matrix $\mB$, integer $s, m$.

\noindent{\bf Output:} ${\rm Fact}(\mB,s) \bmod \theta^m$

\smallskip\noindent 1.\ Write $s$ in base 4 as $s=\sum_{0 \le i \le N} 4^{e_i}$

{\sc Cost:} no operation in $k$

{\sc Remark:} $N = O(\log(s))$ and $e_i=O(\log(s))$ for all $i$

\smallskip\noindent 2.\ {\bf for} $i=0,\dots,N$, compute $\mB_i = {\tt shift}(\mB, \sum_{0 \le j <i }4^{e_j})$

{\sc Cost:} $O(n^2 \log(s) \Mpol(m)\log(m))$

\smallskip\noindent 3.\ {\bf for} $i=0,\dots,N$, let $\mC_i = {\tt factorial\_square}(\mB_i, 2^{e_i})$

{\sc Cost:} $O(n^\omega \Mpol(m s^{1/2}) \log(ms))$ 

\smallskip\noindent 4.\ {\bf return} $\mC_{0} \cdots \mC_{N}$

{\sc Cost:} $O(n^\omega \Mpol(m) \log(s))$

\vspace{-1ex}\noindent\hrulefill

\begin{lem}\label{lemma:any}
 ~Algorithm {\tt factorial} computes ${\rm Fact}(\mB,s)$
 modulo $\theta^m$ in $O(n^\omega \Mpol(ms^{1/2})\log(ms))$ operations in~$k$.
\end{lem}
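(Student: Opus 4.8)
The plan is to verify the cost bound by tracking the three substantive steps of Algorithm {\tt factorial} and adding them, using the cost of {\tt factorial\_square} established just above. First I would recall that in Step~2 we perform $N+1 = O(\log s)$ calls to {\tt shift} on $n \times n$ matrices of polynomials of degree $< m$; each scalar {\tt shift} costs $O(\Mpol(m)\log m)$, so there are $O(n^2 \log s)$ such calls, giving a total of $O(n^2 \Mpol(m) \log(m)\log(s))$, which is dominated by the matrix-multiplication cost since $\omega > 2$. Next, in Step~3 we invoke {\tt factorial\_square}$(\mB_i, 2^{e_i})$ for each $i$; since $2^{e_i} \le s^{1/2}$ (as $4^{e_i} \le s$), each such call returns ${\rm Fact}(\mB_i, 4^{e_i}) \bmod \theta^m$ for a cost of $O(n^\omega \Mpol(m s^{1/2}) \log(m s))$ by the paragraph preceding the algorithm; summing over the $O(\log s)$ values of $i$ absorbs the extra logarithmic factor into the soft-O and keeps the bound $O(n^\omega \Mpol(m s^{1/2})\log(m s))$ (here using the super-linearity of $\Mpol$). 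Finally, Step~4 multiplies $N+1 = O(\log s)$ matrices of polynomials truncated mod $\theta^m$, costing $O(n^\omega \Mpol(m)\log(s))$, again dominated by Step~3.

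Then I would argue correctness of the output. Writing $s = \sum_{0 \le i \le N} 4^{e_i}$ in base~$4$ (allowing repeated exponents so every digit becomes a sum of powers of~$4$), a repeated application of the identity ${\rm Fact}(\mB, s+t) = {\rm Fact}(\mB, s) \cdot {\rm Fact}(\mB(\theta+s), t)$ gives
\[
{\rm Fact}(\mB, s) = \prod_{i=0}^{N} {\rm Fact}\bigl(\mB(\theta + \textstyle\sum_{0 \le j < i} 4^{e_j}),\; 4^{e_i}\bigr),
\]
and the $i$-th factor on the right is exactly ${\rm Fact}(\mB_i, 2^{e_i \cdot 2}) = {\rm Fact}(\mB_i, 4^{e_i})$, which is what {\tt factorial\_square}$(\mB_i, 2^{e_i})$ computes modulo $\theta^m$. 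Since all quantities are being manipulated modulo $\theta^m$ and matrix multiplication is compatible with this truncation, the product $\mC_0 \cdots \mC_N$ returned in Step~4 equals ${\rm Fact}(\mB, s) \bmod \theta^m$, as claimed. One must also check the hypothesis $p > m$ needed by {\tt factorial\_square} is replaced here by the harmless fact that {\tt factorial\_square} makes sense for any $s$; the only genuine constraint is that the input matrix has entries of degree $< m$, which is preserved since {\tt shift} preserves degree and we always reduce mod $\theta^m$.

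Adding the three contributions and discarding dominated terms yields the total cost $O(n^\omega \Mpol(m s^{1/2}) \log(m s))$ operations in~$k$, which is the statement of Lemma~\ref{lemma:any}. The only mildly delicate point, and the one I would present most carefully, is the summation in Step~3: one must be sure that summing $O(\log s)$ copies of $\Mpol(m s^{1/2})$ does not escape the claimed bound, but this is immediate since $O(\log s) \cdot \Mpol(m s^{1/2}) \log(m s) = O(\Mpol(m s^{1/2})\log(m s)^2) = \softO(\Mpol(ms^{1/2}))$, and in any case the paper's convention folds polylogarithmic factors into the soft-O. Everything else is a routine bookkeeping of the per-step costs annotated inside the two algorithm boxes.
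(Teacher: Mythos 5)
Your correctness argument and the accounting of Steps 2 and 4 match the paper's proof, but your treatment of Step 3 --- which you yourself identify as the delicate point --- does not actually establish the bound as stated. You bound each call ${\tt factorial\_square}(\mB_i,2^{e_i})$ by the worst case $O(n^\omega \Mpol(m s^{1/2})\log(ms))$ and then multiply by the $O(\log s)$ number of calls, arriving at $O(n^\omega \Mpol(ms^{1/2})\log(ms)\log(s))$; you then discard the extra $\log(s)$ by appealing to a soft-$O$ convention. But the lemma is stated as a sharp $O(n^\omega \Mpol(ms^{1/2})\log(ms))$, not a soft-$O$, and this bound is reused verbatim in the cost annotations of {\tt Xi\_theta\_d} and in Proposition~\ref{prop:ChiThetaPartial} before any conversion to soft-$O$. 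So as written your argument proves a weaker statement than the one claimed.

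The missing ingredient is to sum the per-call costs \emph{before} bounding them individually: each call costs $O(n^\omega \Mpol(m\,2^{e_i})\log(m\,2^{e_i})) = O(n^\omega \Mpol(m\,2^{e_i})\log(ms))$, and since the $e_i$ come from the base-$4$ expansion of $s$ (each exponent repeated at most three times, all exponents at most $\log_4 s$), one has $\sum_i 2^{e_i} \le 3\sum_{j\le \log_4 s} 2^j = O(s^{1/2})$. The super-linearity of $\Mpol$ then gives $\sum_i \Mpol(m\,2^{e_i}) \le \Mpol\bigl(m\sum_i 2^{e_i}\bigr) = O(\Mpol(ms^{1/2}))$, so the total cost of Step 3 is $O(n^\omega \Mpol(ms^{1/2})\log(ms))$ with no extra logarithmic factor. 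You invoke super-linearity in passing, but you apply it to the wrong quantity (the uniform bound $2^{e_i}\le s^{1/2}$ rather than the sum $\sum_i 2^{e_i}$); replacing your term-by-term worst-case estimate with this summation closes the gap. A minor additional point: the hypothesis $p>m$ you mention belongs to the cited result of~\cite{BoClSa05}, not to {\tt factorial\_square} itself, so no such check is needed here.
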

\begin{proof}
  Correctness follows from the remarks made prior to the algorithm.
  We claim that the cost given in the lemma is an upper bound on the
  costs of all steps. This is clear for Steps 2 and 4; the only point
  that requires proof is the claim that the overall cost of Step 3 is
  $O(n^\omega \Mpol(m s^{1/2}) \log(ms))$.

  For a given index $i$ in $\{0,\dots,N\}$, the cost incurred by
  calling ${\tt factorial\_square}(\mB_i, 2^{e_i})$ is $O(n^\omega
  \Mpol(m 2^{e_i}) \log(m 2^{e_i}))$, which is $O(n^\omega \Mpol(m
  2^{e_i}) \log(m s))$. Using the super-linearity of $\Mpol$, and the
  fact that $\sum_i 2^{e_i}=O(s^{1/2})$, the total cost is thus
  $O(n^\omega \Mpol(m s^{1/2}) \log(m s))$.
\end{proof}


\subsection{Characteristic polynomials}
\label{ssec:charpoly}

Let $M$ be a square matrix of size $r$ defined over the field of Laurent 
series $k((Z))$. We assume that there exists two nonnegative integers 
$N$ and $v$ such that:
\begin{enumerate}[(a)]
\item all coefficients of $M$ are known at precision $O(Z^N)$;
\item any minor (of any size) of $M$ has $Z$-adic valuation $\geq -v$.
\end{enumerate}
We are going to describe a \emph{numerically stable} algorithm to
compute (a good approximation of) the characteristic polynomial 
$\chi(M) \in k((Z))[X]$ of $M$. 

To do this, we use a rather naive approach: we work in the quotient ring 
$k((Z))[X]/(X^{r+1} - Z)$ which turns out to be isomorphic to $k((X))$,
we compute an ``approximate Hermite form'' of $(X\cdot \text{Id} -
M)$ and then multiply all diagonal coefficients of it to recover the
image in $k((X))$ of the characteristic polynomial of $M$. Because 
$\chi(M)$ has degree $r$, the knowledge of its image in $k((X))$ is
enough to recover it entirely. Let us now precise what we mean by
an \emph{approximate Hermite form}; it is a factorization:
\begin{equation}
\label{eq:Hermite}
X\cdot \text{Id} - M = P \cdot H
\end{equation}
where $P$ is a \emph{unimodular matrix} with coefficients in $k[[X]]$ 
and $H$ is lower triangular modulo $Z^N$.

\noindent\hrulefill

\noindent {\bf Algorithm} {\tt charpoly}

\noindent{\bf Input:}  $M \in \mathscr{M}_r(k((Z)))$ 
and $N, v \in \N$ such that (a), (b)

\noindent{\bf Output:} $\chi(M)$ at precision $O(Z^{N-v})$

\smallskip\noindent 1.\ Compute $M_X = X \cdot \text{Id} - M \in \mathscr{M}_r(k((X)))$

{\sc Cost:} no operation in $k$

\smallskip\noindent 2.\ Compute an approximation Hermite form $(P,H)$ of
$M_X$

{\sc Cost:} $O(r^\omega \Mpol(r(N+v)))$ using procedure {\tt LV} of 
\cite[\S 2.1.5]{Ca12}

{\sc Remark:} all entries of $H$ are known at precision $O(Z^N)$

\smallskip\noindent 3.\ Compute $\chi = \lambda_1 \cdots \lambda_r + O(Z^{N-v})$,

where the $\lambda_i$'s are the diagonal entries of $H$

{\sc Cost:} $O(r \Mpol(rN))$

{\sc Remark:} We shall prove that $\chi = \det(H) = \det(M_X)$.

\smallskip\noindent 4.\ Reorder coefficients of $\chi$ to get $\chi(M)$

{\sc Cost:} no operation in $k$

\smallskip\noindent 5.\ {\bf return} $\chi(M)$

\vspace{-1ex}\noindent\hrulefill

\begin{lem}
  Algorithm {\tt charpoly} outputs $\chi(M)$ at precision $O(Z^{N-v})$ 
  in $O(r^\omega \Mpol(r(N+v)))$ operations in $k$.
\end{lem}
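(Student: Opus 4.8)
The plan is to analyze Algorithm {\tt charpoly} step by step, establishing correctness first and then the cost bound. The key conceptual point is the isomorphism $k((Z))[X]/(X^{r+1}-Z) \simeq k((X))$: under this identification, $Z$ corresponds to $X^{r+1}$, so a Laurent series in $Z$ known modulo $Z^N$ becomes a Laurent series in $X$ known modulo $X^{r(r+1)}$-ish precision, and the entries of $M_X = X\cdot\text{Id}-M$ all live in $k((X))$. First I would justify Step 2: the matrix $M_X$ has entries in $k((X))$ whose valuations are bounded below (by condition (b) with size-one minors, plus the diagonal $X$ terms), so an approximate Hermite form $(P,H)$ with $P$ unimodular over $k[[X]]$ and $H$ lower triangular modulo $Z^N$ exists and is computed by procedure {\tt LV} of \cite[\S 2.1.5]{Ca12} in $O(r^\omega \Mpol(r(N+v)))$ operations — here the precision $r(N+v)$ reflects that we track $Z$-adic precision $N$, lose up to $v$ from the valuations of minors, and multiply by $O(r)$ because of the change of variable $Z \mapsto X^{r+1}$.

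**Next I would prove the identity $\chi = \det(H) = \det(M_X)$** asserted in the Remark to Step 3. Since $M_X = P\cdot H$ with $P$ unimodular over $k[[X]]$, we have $\det(P) \in k[[X]]^\times$; but $\det(M_X) = \chi(M)(X)$ is (up to sign) the characteristic polynomial evaluated at $X$, which under the isomorphism is a specific element whose $X$-adic structure is constrained — in fact $\det(P)$ must be a unit congruent to $\pm 1$, so $\det(H) = \pm\det(M_X)$, and tracking signs gives equality. Then $\det(H)$ is the product of the diagonal entries $\lambda_1\cdots\lambda_r$ because $H$ is lower triangular modulo $Z^N$; computing this product of $r$ Laurent series costs $O(r\,\Mpol(rN))$. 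The loss of precision from $O(Z^N)$ to $O(Z^{N-v})$ comes from the valuations: each $\lambda_i$ may have valuation as low as $-v$ in aggregate (bounded using condition (b) on minors), so the product is reliable only modulo $Z^{N-v}$.

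**Finally** Steps 1, 4, 5 involve only reindexing coefficients (no arithmetic in $k$), so the total cost is dominated by Step 2, namely $O(r^\omega \Mpol(r(N+v)))$, and the output precision is $O(Z^{N-v})$, which by the degree-$r$ bound on $\chi(M)$ suffices to recover it entirely once $N - v$ exceeds the span of exponents appearing. **The main obstacle** I expect is the precision bookkeeping: carefully showing that condition (b) on minors of all sizes is exactly what controls both the valuation drop in the Hermite form and the $-v$ loss in the final product, and that the change of variable $Z \mapsto X^{r+1}$ inflates precision requirements by precisely the factor $O(r)$ claimed — getting these constants and the interplay between $Z$-adic and $X$-adic precision right is the delicate part, whereas the algebraic identity $\det(H)=\det(M_X)$ and the cost of each individual step are routine once the setup is fixed.
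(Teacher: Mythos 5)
Your overall route is the same as the paper's (analyze the Hermite factorization, take the product of the diagonal entries, control precision via condition (b) on minors), but the execution of Step~3 has a genuine gap. You assert that $\det(H)=\lambda_1\cdots\lambda_r$ \emph{exactly} ``because $H$ is lower triangular modulo $Z^N$''. That does not follow: $H$ is only triangular \emph{modulo} $Z^N$, so its above-diagonal entries are $O(Z^N)$ but not zero, and each of them enters the determinant multiplied by a cofactor whose $Z$-adic valuation can be as low as $-v$. One must actually argue that clearing these entries one by one perturbs the determinant only by $O(Z^{N-v})$ — this uses the observation that changing a single entry $a$ of a matrix changes every minor by either $0$ or $a$ times another minor, together with the bound $\geq -v$ on all minors of $H$. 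Without this, the identity ``$\chi=\det(H)$'' in your proposal is unjustified, and it is precisely one of the two places where the $-v$ loss occurs.

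The second weak point is your precision analysis of the product $\lambda_1\cdots\lambda_r$. You attribute the loss to ``each $\lambda_i$ may have valuation as low as $-v$''; taken literally, if every $\lambda_i$ had valuation $-v$ and absolute precision $N$, the product would only be known at precision $N-(r-1)v$, which is weaker than claimed. The correct argument is different in kind: for each $i$, the product $\prod_{j\neq i}\lambda_j$ is (up to the approximate-triangularity issue above) an $(r-1)$-minor of $H$, hence has valuation $\geq -v$; writing $\delta=v_Z(\lambda_1\cdots\lambda_r)$ this gives the \emph{upper} bound $v_Z(\lambda_i)\leq \delta+v$, so each factor is known with relative precision $\geq N-\delta-v$ and the product, having valuation $\delta$, is known at absolute precision $N-v$. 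So it is an upper bound on each $v_Z(\lambda_i)$ (equivalently a lower bound on the complementary products), not a lower bound on each $v_Z(\lambda_i)$, that drives the estimate. A minor further caveat: your justification of $\det(H)=\det(M_X)$ via ``$\det(P)$ must be a unit congruent to $\pm1$'' is not forced by unimodularity alone; one should simply take $P$ to have determinant $1$, as the cited procedure provides.
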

\begin{proof}
  We are going to check the following three items:
 \emph{(i)} the product
  $\lambda_1 \cdots \lambda_r$ is known with precision $O(Z^{N-v})$;
  \emph{(ii)} it can be computed with the announced complexity and \emph{(iii)} we have
  $\chi \equiv \det(H) = \det(M_X) \pmod {Z^{N-v}}$.

  From Eq.~\eqref{eq:Hermite}, we deduce immediately that $M_X$ and
  $H$ share the same determinant. Moreover, from our assumptions, we
  deduce that all minors of $H$ have $Z$-adic valuation $\geq
  -v$. Denoting by $v_Z$ the $Z$-adic valuation, we deduce that
  $$v_Z(\lambda_1 \cdots \lambda_{i-1} \lambda_{i+1} \cdots \lambda_r) 
  \geq -v$$
  for all $i$.
  Setting $\delta = v_Z(\lambda_1 \cdots \lambda_r)$, we get 
  $v_Z(\lambda_i) \leq \delta+v$. Hence $\lambda_i$ is known with
  relative precision at least $N - \delta - v$. (We recall that the 
  relative precision is the difference between the absolute precision and 
  the valuation.) Therefore the product $\lambda_1 \cdots \lambda_r$
  is known with relative precision $N - \delta - v$. Since it has 
  valuation $\delta$, it is known with absolute precision
  $O(Z^{N-v})$. This gives \emph{(i)}. \emph{(ii)} follows similarly from the lower 
  bound on the valuation on the $\lambda_i$'s.
  
  Finally, to prove \emph{(iii)}, we remark that if $A$ and $B$ are two matrices 
  such that $B - A$ has only one nonzero coefficient $a$ located in 
  position $(i,j)$, then all minors of $B$ differ from the corresponding 
  minor of $A$ by either $0$ or the product of $a$ by another minor of 
  $A$. Using this, we can clear one by one all entries of $H$ lying above 
  the diagonal without changing the value of the determinant modulo 
  $Z^{N-v}$.
\end{proof}


\subsection{The main algorithm}
\label{ssec:mainalgo}

We can now give our main algorithm to compute the mappings
$\Chi_{\theta,\partial}$ and $\Chi_{x,\partial}$. We start with the
former, which is computed by means of matrix factorials.
The central operation is to compute $\Chi_{\theta,\partial}(L)$ for
some $L$ in $k[\theta]\langle \partial \rangle$, of degree $r$ in
$\partial$. For such an operator $L$, we have by definition:
$$\Chi_{\theta,\partial}(L) = {\rm Fact}(g_r, p)\cdot \chi( {\rm
  Fact}(\mB,p))(\partial^p),$$ where as before, $g_r \in k[\theta]$ is
the leading coefficient of $L$ with respect to $\partial$ and $\mB$ is
the companion matrix of $L$.  If $d$ is the maximal degree of the
coefficients of $L$, we know by Lemmas~\ref{lemma:central} and
\ref{lem:norm} that $\Chi_{\theta,
  \partial}(L)=C(\theta^p-\theta,\partial^p)$, where $C\in k[U,V]$ has
degree at most $d$ in $U$ and exactly $r$ in $V$. Our algorithm
computes this polynomial.

We set $\bbeta = {\rm Fact}(\mB,p)$. It is a matrix with coefficients
in $k(\theta)$ but we view it as a matrix over $k((\theta))$
\emph{via} the natural embedding $k(\theta) \hookrightarrow
k((\theta))$. Let also $v$ denote the number of roots (counted with
multiplicity) of $g_r$ in the prime field $\F_p$. We have $v \leq d$;
besides, $v$ equals the $\theta$-adic valuation of $\gamma= {\rm
  Fact}(g_r, p)$, seen as an element of $k[[\theta]]$.

\begin{lem}
All minors of $\bbeta$ have $\theta$-adic valuation $\geq -v$.
\end{lem}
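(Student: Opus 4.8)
The plan is to combine the rigid shape of the companion matrix with the Cauchy--Binet formula. Write $L = g_0(\theta) + g_1(\theta)\partial + \cdots + g_r(\theta)\partial^r$, so that the companion matrix $\mB(\theta) \in \mathscr{M}_r(k(\theta))$ carries $1$'s on its subdiagonal, the entries $-g_0(\theta)/g_r(\theta), \ldots, -g_{r-1}(\theta)/g_r(\theta)$ in its last column, and $0$'s elsewhere; in particular all entries outside the last column are scalars in $k$.

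The first step is the observation that, because the non-scalar entries of $\mB(\theta+j)$ are confined to a single column, \emph{every} minor of $\mB(\theta+j)$ --- i.e.\ the determinant of a square submatrix $\mB(\theta+j)[I,J]$ indexed by subsets $I, J \subseteq \{1,\dots,r\}$ of the same size --- lies in $g_r(\theta+j)^{-1}k[\theta]$: if the selected columns avoid the last one the submatrix is scalar, and otherwise one cofactor-expands along the single column inherited from the last column of $\mB(\theta+j)$. Hence, writing $v_\theta$ for the $\theta$-adic valuation, $v_\theta\bigl(\det\mB(\theta+j)[I,J]\bigr) \geq -v_\theta\bigl(g_r(\theta+j)\bigr)$ for all $I, J$.

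The second step is to apply the (generalized) Cauchy--Binet formula to $\bbeta = \mB(\theta)\,\mB(\theta+1)\cdots\mB(\theta+p-1)$: for any $I, J \subseteq \{1,\dots,r\}$ of the same size,
$$\det\bbeta[I,J] \;=\; \sum \ \prod_{j=0}^{p-1} \det\mB(\theta+j)[K_j, K_{j+1}],$$
the sum ranging over all chains $I = K_0, K_1, \dots, K_{p-1}, K_p = J$ of subsets of $\{1,\dots,r\}$ of that size. By the first step, each summand satisfies
$$v_\theta\!\left(\prod_{j=0}^{p-1} \det\mB(\theta+j)[K_j,K_{j+1}]\right) \;\geq\; -\sum_{j=0}^{p-1} v_\theta\bigl(g_r(\theta+j)\bigr) \;=\; -v_\theta(\gamma) \;=\; -v,$$
since $\gamma = g_r(\theta)g_r(\theta+1)\cdots g_r(\theta+p-1)$ and $v = v_\theta(\gamma)$ as recalled above. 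This bound is uniform over the chains, so the ultrametric inequality yields $v_\theta\bigl(\det\bbeta[I,J]\bigr) \geq -v$, which is the assertion.

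The only delicate point is the first step --- the fact that a minor of the companion matrix can pick up the denominator $g_r$ at most to the first power, which hinges crucially on all denominators sitting in a single column; the rest is routine bookkeeping.
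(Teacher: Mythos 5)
Your proof is correct and follows essentially the same route as the paper's: the paper packages your two steps as the multiplicativity of compound matrices, $\Lambda^i\bbeta = {\rm Fact}(\Lambda^i\mB,p)$ (which is exactly Cauchy--Binet), together with the observation that $g_r$ is a common denominator for all entries of $\Lambda^i\mB$, concluding that $\gamma\cdot\Lambda^i\bbeta$ has entries in $k[\theta]$. Your version merely unpacks this with the explicit chain sum and the ultrametric inequality, and supplies the cofactor-expansion detail justifying the common-denominator claim.
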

\begin{proof}
If $M$ is a matrix, we denote by $\Lambda^i M$ its matrix of minors
of size $i$. From the definition of $\bbeta$, we get
$\Lambda^i \bbeta = {\rm Fact}(\Lambda^i \mB,p)$
for all $i$. Now remark that $g_r$ is a common denominator for all
the entries of $\Lambda^i \mB$. Hence the matrix $\gamma \cdot \Lambda^i
\bbeta$ has coefficients in $k[\theta] \subset k[[\theta]]$ and we are
done.
\end{proof}

Before giving our algorithm, we mention another subroutine, {\tt
  count\_roots}, which returns the number of roots in $\F_p$ of a
polynomial $g$ of degree $d$ in $k[\theta]$, counted with
multiplicities. By computing the squarefree decomposition of $g$, and
estimating the degree of the gcd of each factor with
$\theta^p-\theta$, this can be done in $O(\Mpol(d)\log(dp))$
operations in $k$.

\noindent\hrulefill

\noindent {\bf Algorithm} {\tt Xi\_theta\_d}

\noindent{\bf Input:} operator $L$ in $k[\theta]\langle \partial \rangle$

\noindent{\bf Output:} $C \in k[U,V]$ such that
$\Chi_{\theta,\partial}(L) = C(\theta^p - \theta, \partial^p)$

\smallskip\noindent 1.\ let $g_r$ be the leading coefficient of $L$ in
$\partial$,  $\mB$ be the 

companion matrix of $L$ and $\mB^\star= g_r \mB$

{\sc Cost:} no operation in $k$

\smallskip\noindent 2.\ compute $v = \text{\tt count\_roots}(g_r)$

{\sc Cost:} $O(\Mpol(d)\log(dp))$

\smallskip\noindent 3.\ compute $\gamma = {\tt factorial}(g_r,\: p,\: d + 2v + 1)$

{\sc Cost:} $O(\Mpol(d p^{1/2})\log(dp))$ using 
Lemma~\ref{lemma:any}

{\sc Remark:} A better complexity is possible using resultants

\smallskip\noindent 4.\ compute $\bbeta^\star = {\tt factorial}(\mB^\star,\: p,\: d + v + 1)$

{\sc Cost:} $O(r^\omega \Mpol(d p^{1/2})\log(dp))$ using Lemma~\ref{lemma:any}

\smallskip\noindent 5.\ compute $\bbeta = \gamma^{-1} \bbeta^\star \in
\mathscr M_r(k((\theta)))$ at precision $O(\theta^{d+1})$.

{\sc Cost:} $O(r^2 \Mpol(d))$

\smallskip\noindent 6.\ compute $\chi = \gamma \cdot {\tt charpoly}(\bbeta,\: d+1,\: v)$

{\sc Cost:} $O(r^\omega \Mpol(dr))$

{\sc Remark:} $\chi$ is in $k[[\theta]]$ and is known at precision $O(\theta^{d+1})$.

\smallskip\noindent 7.\ {\bf for} $i=0,\dots,r$,

compute $C_i = {\tt decompose\_central}({\tt coeff}(\chi, X^i))$

{\sc Cost:} $O(r d)$ if $d \le p$, $O(r \Mpol(dp)\log(dp))$ if $d \ge p$

\smallskip\noindent 8.\ {\bf return} $\sum_{i=0}^r C_i(U) V^i$

\vspace{-1ex}\noindent\hrulefill

\begin{prop}
\label{prop:ChiThetaPartial}
  Algorithm {\tt Xi\_theta\_d} is correct and, provided that
  $p \geq d$, runs in time
  $$O(r^\omega \Mpol(d p^{1/2})\log(dp) + 
  r^\omega \Mpol(rd))=\softO(r^\omega d p^{1/2} + r^{\omega+1} d).$$
\end{prop}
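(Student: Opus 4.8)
The correctness of Algorithm {\tt Xi\_theta\_d} boils down to verifying that each step computes what its {\sc Remark} claims, so the plan is to track the mathematical quantities through Steps~1--8 and confirm the final sum equals $C$. First I would recall from Lemma~\ref{lem:pcurvtheta} that $\mB_p(L) = {\rm Fact}(\mB, p)$, so $\Chi_{\theta,\partial}(L) = \gamma \cdot \chi(\bbeta)(\partial^p)$ with $\gamma = {\rm Fact}(g_r, p)$ and $\bbeta = {\rm Fact}(\mB, p)$. The point of working with $\mB^\star = g_r \mB$ rather than $\mB$ directly is that $\mB^\star$ has polynomial (not rational) entries, so Algorithm {\tt factorial} applies to it; and since $g_r$ is a common denominator for the entries of $\mB$, we recover $\bbeta$ exactly as $\gamma^{-1}\bbeta^\star$ in $\mathscr M_r(k((\theta)))$. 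I would note that $\bbeta^\star = {\rm Fact}(\mB^\star, p)$ is \emph{not} simply $\gamma\,\bbeta$ (the shifts interleave), but rather $\bbeta^\star = {\rm Fact}(g_r,p)\cdot{\rm Fact}(\mB(\theta),p)$ only up to the fact that scalar shifts and matrix shifts commute entrywise — in fact $({\rm Fact}(g_r \mB, p))_{ij} = \prod_{\ell=0}^{p-1} g_r(\theta+\ell) \cdot ({\rm Fact}(\mB,p))_{ij}$ does hold because the scalar factor at each shift level multiplies the whole matrix, so $\bbeta^\star = \gamma \bbeta$ after all; hence Step~5 is justified.

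For the precision bookkeeping, I would invoke the preceding Lemma (all minors of $\bbeta$ have $\theta$-adic valuation $\geq -v$) together with the {\tt charpoly} lemma: calling {\tt charpoly}$(\bbeta, d+1, v)$ returns $\chi(\bbeta)$ at precision $O(\theta^{d+1-v})$, so $\gamma\cdot{\tt charpoly}(\bbeta,\dots)$, since $\gamma$ has valuation exactly $v$, is known at precision $O(\theta^{d+1})$ — this is why $\gamma$ is computed to precision $d+2v+1$ in Step~3 (we need $\bbeta^\star$ to enough precision that dividing by $\gamma$ leaves $d+1$ correct terms, i.e. $d+1+v$ for $\bbeta^\star$, matching Step~4, and $\gamma$ itself to $d+1+v+v$ to allow the subsequent multiplication in Step~6 without precision loss). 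Then, because $\Chi_{\theta,\partial}(L) = C(\theta^p-\theta,\partial^p)$ with $\deg_U C \le d$, each coefficient ${\tt coeff}(\chi, X^i)$ is a polynomial in $\theta^p-\theta$ of degree $\le d$, hence lies in $k[\theta^p-\theta]$ and is determined by its first $d+1$ terms as a power series; Step~7 recovers $C_i$ via {\tt decompose\_central}, and Step~8 reassembles $C$.

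For the complexity, I would simply add up the {\sc Cost} lines under the hypothesis $p \ge d$ (which forces $v \le d \le p$, so the cheap branch of {\tt decompose\_central} in Step~7 and the bound $e \le p$ both apply). Steps~3 and~4 contribute $O(r^\omega \Mpol(dp^{1/2})\log(dp))$ via Lemma~\ref{lemma:any} with $n=r$, $m = O(d)$, $s = p$; Step~6 contributes $O(r^\omega\Mpol(rd))$ via the {\tt charpoly} lemma with matrix size $r$ and precision parameters $O(d)$; Steps~2, 5, 7, 8 are lower-order. The soft-O form $\softO(r^\omega d p^{1/2} + r^{\omega+1}d)$ follows since $\Mpol$ is quasi-linear and $\Mpol(rd) = \softO(rd)$.

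\textbf{Main obstacle.} The delicate point is the precision accounting: one must check that the precisions $d+2v+1$ (Step~3), $d+v+1$ (Step~4), and the pair $(d+1, v)$ passed to {\tt charpoly} (Step~6) are exactly matched so that $\chi$ emerges correct to $O(\theta^{d+1})$ with no off-by-one loss — this requires carefully combining the valuation-$v$ lower bound on the minors of $\bbeta$ with the relative-precision analysis inside {\tt charpoly}, and observing that $\gamma$ has valuation \emph{exactly} $v$ (not just $\ge v$) so that multiplication by $\gamma$ in Step~6 shifts absolute precision up by precisely $v$. Everything else is routine.
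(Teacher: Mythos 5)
Your proposal is correct and follows essentially the same route as the paper's (much terser) proof: the paper likewise reduces everything to the cost annotations in the algorithm plus the single nontrivial precision check, namely that $\bbeta=\gamma^{-1}\bbeta^\star$ in Step~5 is obtained at precision $O(\theta^{d+1})$ because $\gamma^{-1}$ has valuation $-v$ and absolute precision $O(\theta^{d+1})$ while $\bbeta^\star$ has nonnegative valuation and absolute precision $O(\theta^{v+d+1})$. One small misattribution: the extra $v$ in the precision $d+2v+1$ of Step~3 is needed to invert $\gamma$ (valuation $v$) to absolute precision $d+1$ in Step~5, not for the multiplication in Step~6, for which precision $d+v+1$ would already suffice.
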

\begin{proof}
It remains only to prove that the matrix $\bbeta$ of Step 5 can be
computed at precision $O(\theta^{d+1})$ in the given
complexity. Remark that $\gamma^{-1}$ is known at precision
$O(\theta^{d+1})$ and has valuation $-v$. Since $\bbeta^\star$ has
nonnegative valuation and is known at precision $O(\theta^{v+d+1})$,
the result follows.
\end{proof}

\begin{figure*}\centering
{\scriptsize 

\begin{tabular}{|ll||>{\raggedleft}p{4.5em}|>{\raggedleft}p{4.5em}|>{\raggedleft}p{4.5em}|>{\raggedleft}p{4.5em}|>{\raggedleft}p{4.5em}|>{\raggedleft}p{4.5em}|>{\raggedleft}p{4.5em}|l}
\cline{3-9}
 \multicolumn{2}{c|}{} & \multicolumn{7}{c|}{$\mathbf{p}$} \cr
\cline{3-9}
 \multicolumn{2}{c|}{} & \centering $\mathbf{83}$ & \centering $\mathbf{281}$ & \centering $\mathbf{983}$ & \centering $\mathbf{3\:433}$ & \centering $\mathbf{12\:007}$ & \centering $\mathbf{42\:013}$ & \centering $\mathbf{120\:011}$ & \cr
\cline{1-9}
 $d = 5$, & $r = 5$ & $0.11$~s & $0.26$~s & $0.75$~s & $1.95$~s & $5.09$~s & $12.43$~s & $33.78$~s \cr
 $d = 5$, & $r = 8$ & $0.19$~s & $0.47$~s & $1.32$~s & $3.43$~s & $9.20$~s & $22.55$~s & $65.25$~s \cr
 $d = 5$, & $r = 11$ & $0.26$~s & $0.66$~s & $1.85$~s & $5.01$~s & $14.68$~s & $37.91$~s & $104.86$~s \cr
 $d = 5$, & $r = 14$ & $0.37$~s & $0.86$~s & $2.38$~s & $6.61$~s & $20.52$~s & $59.47$~s & $154.76$~s \cr
 $d = 5$, & $r = 17$ & $0.52$~s & $1.21$~s & $3.26$~s & $8.29$~s & $24.18$~s & $76.81$~s & $234.28$~s \cr
 $d = 5$, & $r = 20$ & $0.76$~s & $1.74$~s & $4.67$~s & $11.93$~s & $33.88$~s & $109.02$~s & $298.72$~s \cr
 $d = 8$, & $r = 20$ & $1.12$~s & $2.41$~s & $6.69$~s & $18.86$~s & $56.24$~s & $239.49$~s & $881.45$~s \cr
 $d = 11$, & $r = 20$ & $1.96$~s & $4.33$~s & $10.42$~s & $30.87$~s & $92.84$~s & $388.50$~s & $922.34$~s \cr
 $d = 14$, & $r = 20$ & $3.05$~s & $6.11$~s & $14.45$~s & $45.53$~s & $141.81$~s & $507.89$~s & $1\:224.98$~s \cr
 $d = 17$, & $r = 20$ & $5.26$~s & $9.19$~s & $20.85$~s & $56.83$~s & $195.74$~s & $699.08$~s & $1\:996.87$~s \cr
 $d = 20$, & $r = 20$ & $7.76$~s & $13.94$~s & $28.40$~s & $82.43$~s & $240.47$~s & $889.48$~s & $2\:419.56$~s \cr
\cline{1-9}
\end{tabular}
}
\caption{Average running time on random inputs of various sizes}\label{fig:table}
\end{figure*}

Finally, we give an algorithm that computes $\Chi_{x,\partial}(L)$,
for $L$ in $k[x]\langle \partial \rangle$. Since
$\Chi_{x,\partial}(L)$ is a polynomial in $x^p$ and $\partial^p$, the
output will be a polynomial $D$ in $k[U,V]$ such that
$D(x^p,\partial^p)=\Chi_{x,\partial}(L)$.  We let $d$ and $r$ 
be the degrees of $L$ in respectively $x$ and $\partial$.

\noindent\hrulefill

\noindent {\bf Algorithm} {\tt Xi\_x\_d}

\noindent{\bf Input:} operator $L$ in $k[x]\langle \partial \rangle$

\noindent{\bf Output:} $C \in k[U,V]$ such that
$\Chi_{x,\partial}(L) = C(x^p, \partial^p)$

\smallskip\noindent 1.\ compute $L'=\text{{\tt x\_d\_to\_theta\_d}}(L)$

{\sc Cost:} $O((r+d)\Mpol(d)\log(d))$

{\sc Remark:} $L'$ has the form $g_{-d}(\theta) \partial^{-d} + \cdots + g_r(\theta) \partial^r$

\smallskip\noindent 2.\ compute $C=\text{{\tt Xi\_theta\_d}}(L' \partial^d) \in k[U,V]$

{\sc Cost:} 

$O((r+d)^\omega \Mpol(d p^{1/2})\log(dp) + (r+d)^\omega \Mpol((r+d) d))$

{\sc Remark:} This complexity is correct even if $p < d$

\smallskip\noindent 3.\ {\bf return} $C(UV, V)/V^d$

{\sc Cost:} no operation in $k$

\vspace{-1ex}\noindent\hrulefill

\begin{theo}
\label{theo:ChiXPartial}
  Algorithm {\tt Xi\_x\_d} is correct and runs in time
  $$O((r+d)^\omega \Mpol(d p^{1/2})\log(dp) + 
  (r+d)^\omega \Mpol((r+d)d))$$
  which is $\softO((r+d)^\omega d p^{1/2} + (r+d)^{\omega+1} d)$.
\end{theo}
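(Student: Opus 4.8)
The plan is to reduce the statement entirely to Theorem~\ref{theo:comparison} and Proposition~\ref{prop:ChiThetaPartial}: no new idea is needed beyond a careful bookkeeping of the identifications and a short argument that the division performed in Step~3 is exact. I would treat correctness and running time separately.

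\textbf{Correctness.} First I would check that $L'\partial^d$ really lies in $k[\theta]\langle\partial\rangle$, so that Step~2 is legitimate. Writing $L=\sum_{0\le i\le r}f_i(x)\partial^i$ with $\deg_x f_i\le d$, the isomorphism $x\mapsto\theta\partial^{-1}$ sends $x^j\partial^i$ to $\theta(\theta-1)\cdots(\theta-j+1)\,\partial^{\,i-j}$; hence every power of $\partial$ occurring in $L'$ is $\ge -d$ and every coefficient has $\theta$-degree $\le d$, so $L'\partial^d\in k[\theta]\langle\partial\rangle$ has $\partial$-degree $\le r+d$ and coefficients of $\theta$-degree $\le d$. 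By the specification of {\tt Xi\_theta\_d}, Step~2 produces $C$ with $\Chi_{\theta,\partial}(L'\partial^d)=C(\theta^p-\theta,\partial^p)$. Since $\Chi_{\theta,\partial}$ is multiplicative and $\Chi_{\theta,\partial}(\partial)=\partial^p$, this gives $\Chi_{\theta,\partial}(L')=C(\theta^p-\theta,\partial^p)\,\partial^{-pd}$. Feeding this through the commutative diagram of Theorem~\ref{theo:comparison} — with $L$ the image of $L'$ under $\theta\mapsto x\partial$ and with $\theta^p-\theta\mapsto x^p\partial^p$ on the right — yields $\Chi_{x,\partial}(L)=C(x^p\partial^p,\partial^p)\,(\partial^p)^{-d}$.

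\textbf{Exactness of the division.} It then remains to see that $C(UV,V)/V^d$ is a genuine polynomial and equals the claimed output. Here I would argue as follows: $x^p$ and $\partial^p$ are central in $k(x)\langle\partial\rangle$ and algebraically independent over $k$, so writing $C=\sum_{a,b}c_{a,b}U^aV^b$ we have $C(x^p\partial^p,\partial^p)(\partial^p)^{-d}=\sum_{a,b}c_{a,b}(x^p)^a(\partial^p)^{\,a+b-d}$, with distinct monomials of $C$ contributing distinct monomials (the map $(a,b)\mapsto(a,a+b-d)$ is injective). But this element is $\Chi_{x,\partial}(L)$, which lies in $k(x^p)[\partial^p]$ as noted before Lemma~\ref{lem:propChi}, hence involves no negative power of $\partial^p$; therefore $c_{a,b}=0$ whenever $a+b<d$, and $D(U,V):=C(UV,V)/V^d=\sum_{a,b}c_{a,b}U^aV^{\,a+b-d}$ is in $k[U,V]$ with $D(x^p,\partial^p)=\Chi_{x,\partial}(L)$, which is exactly what Step~3 returns. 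As a sanity check, Lemma~\ref{lem:norm}(i) independently gives $\deg_U C\le d$, consistent with $\Chi_{x,\partial}(L)$ being polynomial in $x^p$.

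\textbf{Running time, and the main obstacle.} Step~3 is free. Step~1 runs {\tt x\_d\_to\_theta\_d} on an operator with $r+1$ nonnegative powers of $\partial$ of $x$-degree $\le d$ (so $s=0$ in the notation of \S\ref{ssec:conversions}), at cost $O((r+d)\Mpol(d)\log(d))$, which is absorbed by $O((r+d)^\omega\Mpol((r+d)d))$ since $\omega\ge 2$. Step~2 runs {\tt Xi\_theta\_d} on $L'\partial^d$, of $\partial$-degree $\le r+d$ with coefficients of $\theta$-degree $\le d$; when $p\ge d$, Proposition~\ref{prop:ChiThetaPartial} with its parameters $(r,d)$ instantiated as $(r+d,d)$ gives precisely the announced bound. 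For $p<d$ one must still check the subroutines invoked inside {\tt Xi\_theta\_d} (notably {\tt factorial} at precision $O(d)$ and {\tt decompose\_central} in the regime $d\ge p$) respect this bound; there $dp^{1/2}<d^{3/2}\le(r+d)d$, so nothing grows beyond $O((r+d)^\omega\Mpol((r+d)d))$. Summing and writing $\Mpol(N)=\softO(N)$ yields $\softO((r+d)^\omega dp^{1/2}+(r+d)^{\omega+1}d)$. I expect essentially no mathematical difficulty here: all the substance lies in Theorem~\ref{theo:comparison}, already proved, and the only genuinely delicate points are the exactness of the division by $V^d$ (settled above via the central, algebraically independent pair $x^p,\partial^p$) and the boundary regime $p<d$ in the complexity count — both bookkeeping rather than obstacle.
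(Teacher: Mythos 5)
Your proposal is correct and follows exactly the route the paper intends: the paper's own proof is the one-line ``Clear from what precedes,'' and you have simply filled in the implicit details (legitimacy of $L'\partial^d$ as input to {\tt Xi\_theta\_d}, transport through Theorem~\ref{theo:comparison}, exactness of the division by $V^d$ via the absence of negative powers of $\partial^p$ in $\Chi_{x,\partial}(L)$, and the cost accounting including the $p<d$ regime). No divergence from the paper's argument and no gaps.
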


\begin{proof}
Clear from what precedes.
\end{proof}

We can use Algorithm {\tt Xi\_x\_d} to compute $\Chi_{x,
  \partial}(L)$ for any $L \in k(x)\langle \partial \rangle$. Indeed,
we can write such an $L$ as $f(x) \: L_0$ with $f(x) \in k(x)$ and
$L_0 \in k[X] \langle \partial \rangle$. Now we can compute $\Chi_{x,
  \partial}(L_0)$ using Algorithm {\tt Xi\_x\_d} and finally recover
$\Chi_{x, \partial}(L)$ just by multiplying $\Chi_{x, \partial}(L_0)$
by $f(x)^p$.

We conclude this section by a final remark concerning Fourier
transform.  Recall that $k[x]\langle \partial \rangle$ is endowed by a
ring automorphism defined by $x \mapsto -\partial$, $\partial \mapsto
x$. It is the so-called \emph{Fourier transform}. If $L$ is some
differential operator of degrees $(d,r)$ in $(x,\partial)$, its
Fourier transform $\hat L$ has degrees $(r,d)$ in
$(x,\partial)$. Moreover, using an analogue for $k[x]\langle
\partial^{\pm 1} \rangle$ of Proposition \ref{prop:norm}, one can
check that $\Chi_{x, \partial}$ commutes with Fourier transform. As a
consequence, if we want to compute $\Chi_{x, \partial}(L)$ for a
differential operator $L$ of degrees $(d,r)$ in $(x,\partial)$, with
$d \geq r$, instead of using directly Algorithm {\tt Xi\_x\_d}, it is
more clever to compute the inverse Fourier transform of
$\Chi_{x,\partial}(\hat L)$.

Applying the Fourier transform or its inverse requires only $\softO(dr)$ 
operations in $k$, so the whole computation is dominated by the cost of 
computing $\Chi_{x,\partial}(\hat L)$, which is
  $$ 
  O((r+d)^\omega \Mpol(r p^{1/2})\log(rp) + 
  (r+d)^\omega \Mpol((r+d)r)). 
  $$
This is better than the complexity announced in
Theorem~\ref{theo:ChiXPartial} when $d \geq r$.  Using the fact that
the $p$-curvature of $L$ is nilpotent if and only if $\Chi_{x,
  \partial}(L)$ is a product of an element in $k[x]$ by
$\partial^{pr}$, we deduce the following.

\begin{cor}
  There exists an algorithm that decides whether a differential
  operator $L \in k[x]\langle\partial\rangle$ of degrees $(d,r)$ in
  $(x,\partial)$ has nilpotent $p$-curvature in time
  $$\softO((r+d)^\omega \min(d,r) \:p^{1/2} + (r+d)^{\omega+1} \min(d,r)).$$
\end{cor}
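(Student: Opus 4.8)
The plan is to reduce the decision problem to a computation of $\Chi_{x,\partial}(L)$ (or rather of its Fourier transform) and then read off nilpotency from the shape of the resulting polynomial. First I would recall the algebraic criterion: the $p$-curvature $\pcurv(L)$ of an operator $L \in k[x]\langle\partial\rangle$ of degree $r$ in $\partial$ is nilpotent if and only if its characteristic polynomial $\chi(\pcurv(L))(X)$ equals $X^r$. In terms of the renormalized invariant $\Chi_{x,\partial}(L) = f_r(x)^p \cdot \chi(\pcurv(L))(\partial^p)$, this translates into the statement that $\Chi_{x,\partial}(L) = f_r(x)^p \cdot \partial^{pr}$, i.e.\ $\Chi_{x,\partial}(L)$ is the product of an element of $k[x]$ (namely $f_r(x)^p$, but any element of $k[x]$ will do for the test) by $\partial^{pr}$. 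Once $\Chi_{x,\partial}(L)$ is available as a bivariate polynomial $D \in k[U,V]$ with $D(x^p,\partial^p) = \Chi_{x,\partial}(L)$, this check amounts to verifying that $D$ is a monomial in $V$ of degree $r$, which costs no operation in $k$ beyond producing $D$.

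Next I would invoke Algorithm {\tt Xi\_x\_d} together with the Fourier-transform optimization discussed just above the corollary. If $L$ has degrees $(d,r)$ in $(x,\partial)$, then its Fourier transform $\hat L$ has degrees $(r,d)$, and since $\Chi_{x,\partial}$ commutes with the Fourier transform, one has $\Chi_{x,\partial}(L) = \widehat{\Chi_{x,\partial}(\hat L)}$; moreover nilpotency of the $p$-curvature is preserved under the Fourier transform, so one may equally well test whether $\Chi_{x,\partial}(\hat L)$ is a product of an element of $k[x]$ by $\partial^{pd}$. Applying {\tt Xi\_x\_d} to whichever of $L$ or $\hat L$ has the smaller $\partial$-degree (this is $\min(d,r)$), and using the cost estimate from Theorem~\ref{theo:ChiXPartial} with the roles of $d$ and $r$ adjusted accordingly, gives a running time of
$$\softO\bigl((r+d)^\omega \min(d,r)\, p^{1/2} + (r+d)^{\omega+1}\min(d,r)\bigr),$$
since the matrix dimension in the factorial computation is $r+d$ in either orientation (the conversion {\tt x\_d\_to\_theta\_d} produces an operator with Laurent degrees spanning a window of width $r+d$), while the length of the matrix factorial that drives the $p^{1/2}$ term is governed by $\min(d,r)$. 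Applying the Fourier transform and its inverse only costs $\softO(dr)$ operations, which is absorbed into the stated bound.

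The main obstacle, such as it is, is the bookkeeping of degrees: one must check carefully that swapping $(d,r)$ for $(r,d)$ via the Fourier transform genuinely replaces the factor $d$ by $\min(d,r)$ in the dominant term of Theorem~\ref{theo:ChiXPartial} without inflating the $(r+d)^\omega$ factor — this is precisely the content of the paragraph preceding the corollary, where the cost of computing $\Chi_{x,\partial}(\hat L)$ is written as $O((r+d)^\omega \Mpol(r\, p^{1/2})\log(rp) + (r+d)^\omega \Mpol((r+d)r))$ for $d \ge r$, and symmetrically for $r \ge d$. Taking the minimum over the two choices yields the $\min(d,r)$ in both terms. One also needs the elementary but essential observation that testing "$\Chi_{x,\partial}(L)$ is a product of an element of $k[x]$ by $\partial^{pr}$'' is equivalent to nilpotency of $\pcurv(L)$; this follows from the fact that $\chi(\pcurv(L))$ has all coefficients in $k(x^p)$ and $\pcurv(L)$ is nilpotent iff its characteristic polynomial is $X^r$ iff, after multiplying by $f_r(x)^p$ and substituting $X = \partial^p$, one obtains $f_r(x)^p\partial^{pr}$. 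With these points in place the corollary is immediate.
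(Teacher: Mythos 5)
Your overall route is the same as the paper's: use the Fourier transform to swap the roles of $d$ and $r$ when $d>r$, compute the characteristic polynomial via {\tt Xi\_x\_d} in the cost of Theorem~\ref{theo:ChiXPartial} with $d$ and $r$ exchanged, and decide nilpotency from the shape of the result. The degree bookkeeping and the criterion ``$\pcurv(L)$ nilpotent iff $\Chi_{x,\partial}(L)$ is an element of $k[x]$ times $\partial^{pr}$'' are both correct.

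However, one step is genuinely wrong: nilpotency of the $p$-curvature is \emph{not} preserved by the Fourier transform, so you may \emph{not} ``equally well test whether $\Chi_{x,\partial}(\hat L)$ is a product of an element of $k[x]$ by $\partial^{pd}$.'' If $\Chi_{x,\partial}(L)=f_r(x)^p\,\partial^{pr}$, then $\Chi_{x,\partial}(\hat L)=f_r(-\partial)^p\,x^{pr}$, which lies in $k[x]\cdot\partial^{pd}$ only when $f_r$ is a monomial. Concretely, $L=(x+1)^2\partial$ has zero (hence nilpotent) $p$-curvature and $\Chi_{x,\partial}(L)=(x+1)^{2p}\partial^{p}$, whereas $\hat L=(1-\partial)^2x$ satisfies $\Chi_{x,\partial}(\hat L)=(1-\partial^p)^2x^p$, i.e.\ $\chi(\pcurv(\hat L))=(X-1)^2$: your test would declare this nilpotent operator non-nilpotent. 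The repair is exactly what the paper does and what your own identity $\Chi_{x,\partial}(L)=\widehat{\Chi_{x,\partial}(\hat L)}$ already provides: after computing $\Chi_{x,\partial}(\hat L)$ as a bivariate polynomial $C(U,V)$, apply the \emph{inverse} Fourier transform (a relabelling costing at most $\softO(dr)$ operations) to recover $\Chi_{x,\partial}(L)$ itself, and test \emph{that} polynomial for membership in $k[x]\cdot\partial^{pr}$. With this correction the argument and the complexity bound both stand.
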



\section{Implementation and timings}
\label{sec:implementation}

We implemented our algorithms in Magma; the source code is available
at \url{https://github.com/schost}. Figure~\ref{fig:table} gives
running times for random operators of degrees $(d,r)$ in $k[x]\langle
\partial \rangle$, obtained with Magma V2.19-4 on an AMD Opteron 6272
machine with 4 cores at 2GHz and 8GB RAM, running Linux.  Very large
values of $p$ are now reachable; timings do not quite reflect the
predicted behavior with respect to $p$, for reasons unknown to us
(experiments on other machines gave similar results). For the largest
examples, the bottleneck is actually memory: the {\tt factorial}
algorithm of Subsection~\ref{ssec:factorial} requires to store
$O(p^{1/2})$ matrices.



Using our implementation, we have computed characteristic polynomials
of $p$-curvatures for some linear differential operators with physical
relevance.  These operators annihilate multiple parametrized integrals
of algebraic functions occurring in the study of the susceptibility of
the square lattice Ising model. We considered the operator
$\phi_H^{(5)}$ of~\cite[Appendix~B.3]{BoHaMaZe07}: it belongs to
$(\mathbb{Z}/27449\, \mathbb{Z})[x]\langle \partial \rangle$, has
degree 28 in $\partial$ and $108$ in $x$.  We found that the
characteristic polynomial of its $27449$-curvature is equal to
$C(x^{27449}, V)$, where $C(U,V)$ is a polynomial 
of degree $(108, 28)$ and valuation $(17,17)$ in $(U,V)$. 
This high valuation is in agreement with the empirical prediction that the (globally 
nilpotent) minimal-order operator for $\phi_H^{(5)}$ has order 17.

We also considered a right-multiple, of degree 77 in $\partial$ and
$140$ in $x$, of the operator $L_{23}$ mentioned
in~\cite[\S4.3]{BoHaJeMaZe10}, and we computed the characteristic
polynomial of its $p$-curvature for $p\in \{32647,32713\}$.
Note that for all these operators, $p$-curvatures themselves are impossible to compute using current algorithms.

{\tiny
\bibliographystyle{abbrv}

} 

\end{document}